\documentclass[a4paper]{article}

\usepackage{pdfsync}
\usepackage[utf8]{inputenc}
\usepackage{graphicx}
\usepackage{amssymb,amsfonts,amsmath,amsthm}
\usepackage{hyperref}
\usepackage{multirow}
\usepackage{verbatim}
\usepackage[sectionbib,round]{natbib}
\usepackage{color}
\usepackage{bm}
\usepackage{rotating}
\usepackage{chngcntr}

\theoremstyle{plain}
\newtheorem{theorem}{Theorem}[section]

\theoremstyle{definition}
\newtheorem{definition}{Definition}[section]

\newtheorem{remark}{Remark}[section]

\newcommand*{\defeq}{\mathrel{\vcenter{\baselineskip0.5ex \lineskiplimit0pt \hbox{\scriptsize.}\hbox{\scriptsize.}}}=}
\newcommand{\minus}{\scalebox{0.6}{$-$}}
\newcommand{\plus}{\scalebox{0.6}{$+$}}

\begin{document}


\title{On the suitability of ratio variables in data envelopment analysis: An application to education with methodological extensions}

\author{V. J. Bol\'os$^{1,*}$, V. J. España$^2$, R. Ben\'{\i}tez$^1$, V. Coll-Serrano$^3$ \\
{\scriptsize $^1$ Dept. Business Mathematics, University of Valencia,
Avda. Naranjos s/n, Valencia, 46022, Spain.} \\
{\scriptsize $^2$ Dept. Statistics, Operations Research and Numerical Analysis, UNED, Paseo Senda del Rey 11, Madrid, 28040, Spain.} \\
{\scriptsize $^3$ Dept. Applied Economics, University of Valencia, Avda. Naranjos s/n, Valencia, 46022, Spain.} \\
{\scriptsize e-mail\textup{: \texttt{vicente.bolos@uv.es} (V. J. Bol\'os), \texttt{vjespana@ccia.uned.es} (V. J. España), \texttt{rabesua@uv.es} (R. Ben\'{\i}tez)},} \\
{\scriptsize \textup{\texttt{vicente.coll@uv.es} (V. Coll-Serrano)}} 
 \\}

\date{}

\maketitle

\begin{abstract}
It is well known that the use of ratio variables is inconsistent with the fundamental assumptions of convex Data Envelopment Analysis (DEA). However, in this paper, we establish a general result demonstrating the equivalence between DEA models with ratio variables under variable returns to scale and DEA models with volume (non-ratio) variables under constant returns to scale, provided that all variables are ratio variables sharing a common denominator. This significant result enables the development of a framework for evaluating the efficiency of the Organisation for Economic Co-operation and Development (OECD) countries based on the results of the Programme for International Student Assessment (PISA) report, using mean performance scores as outputs. In this framework, we give some methodological innovations, such as the incorporation of the index of economic social and cultural status (ESCS) as an input, thereby enabling fairer comparisons with countries with a lower socio-economic level. Furthermore, we introduce different methods for estimating directions of improvement and calculating targets appropriate to the difficulty of improving each performance score. Finally, we review and introduce several novel contributions to emerging methodologies that can complement classical radial and directional models, such as efficient frontier estimation with adaptive constrained enveloping splines (ACES), stochastic chance-constrained models, and fuzzy models. All these methodologies can be used to analyse data from other PISA or similar reports, allowing non-specialists to implement DEA appropriately.
\end{abstract}

\textit{Keywords:} data envelopment analysis; efficiency measurement; ratio variables; directional distance functions; educational performance

\section{Introduction}

Data envelopment analysis (DEA) is a non-parametric technique used to measure the relative efficiency of a set of homogeneous decision-making units (DMUs) that use multiple inputs to produce multiple outputs \citep{Charnes1978}. Using mathematical programming methods, DEA identifies the best practice frontier of the production possibility set, which is determined by efficient DMUs.
Radial models were introduced by \citet{Charnes1978, Charnes1979, Charnes1981} for constant returns to scale (CRS) (known as Charnes-Cooper-Rhodes (CCR) models), and \citet{Banker1984} for variable returns to scale (VRS) (known as Banker-Charnes-Cooper (BCC) models). In general, they can be input- or output-oriented. In the former case, the aim is to determine the maximum proportionate reduction in inputs permitted by the production possibility set while maintaining the current output level. Conversely, in the output-oriented case, the aim is to find the maximum proportionate increase in outputs while maintaining the current level of input consumption.

In the literature, radial models have been used to analyse the efficiency of the educational systems of the Organisation for Economic Co-operation and Development (OECD) countries, taking into account the average performance scores per student published by the Programme for International Student Assessment (PISA) as outputs. For example, \citet{Kocak2011,Cilin2018} used radial models with average mathematics, reading and science performance scores (from PISA 2009 report) as outputs, taking public education expenditure relative to gross domestic product (GDP) as the only input.
\citet{Buy2022} also used radial models to analyse the same three PISA 2015 output scores, but with inputs from the 2013 Teaching and Learning International Survey (TALIS), specifically teaching time per week and several indices of teachers' self-efficacy and satisfaction.

Nevertheless, it is important to take into account that the use of ratio variables, such as averages or percentages, in conventional VRS and CRS DEA models can lead to serious problems in the interpretation of the results. The suitability of ratio data in these models has been a longstanding topic of academic discussion. Early contributions to this debate primarily emphasized recognizing the interpretation issues and attempted to determine which of the standard DEA models were most appropriate for handling such data \citep{Golany1997, Cooper2007}. Even so, the underlying technology is generally modeled incorrectly and it may include DMUs that cannot be produced. In order to fix this problem, \citet{Olesen2015} developed the R-VRS and R-CRS models of production technology suitable for ratio measures. Once the technologies are constructed, different projection methods can be used to obtain different DEA models and efficiency measures. However, although this technique is correct, it may exclude many activities from the production technology, and it can be too restrictive.

Aside from the debate over ratio variables, one of the limitations of radial models is that they require proportional reductions in inputs or proportional increases in outputs when computing the target activity. Directional models, introduced by \citet{Chambers1996,Chambers1998} and \citet{Briec1997}, overcome this drawback generalizing radial models by allowing the user to modify the proportion of inputs and outputs, thus enabling a custom orientation that takes into account the particularities of the market and the management criteria chosen by the producer. Other popular non-radial models are slacks-based measure (SBM) of efficiency models, introduced by \citet{Tone2001}. Nevertheless, these models are not as suitable as directional models for calculating the target for two reasons. Firstly, they do not allow the user to select the direction of improvement; secondly, the target obtained by SBM models is the furthest point on the efficient frontier that is dominated by the DMU being evaluated. This second drawback is addressed in SBM-Min models \citep{Tone2016}, but other problems, such as non-monotonicity, arise.

Despite its widespread use, traditional DEA is highly susceptible to overfitting, particularly when the dimensionality of the data is high or the sample size is small. Moreover, in many situations, DEA tends to generate overly optimistic efficiency scores, often evaluating units as efficient even when they are far from the true theoretical frontier. For example, under variable returns to scale, a DMU that attains the best value for a given input or output is always deemed efficient, regardless of the values of the remaining variables. To address these limitations, the integration of machine-learning (ML) techniques into frontier estimation has emerged as a robust alternative \citep{esteve2020, valero2021, guillen2023, moragues2023}.
However, these methods typically involve a substantially higher computational cost than standard linear programming approaches and require a careful hyperparameter selection process.
In this context, \citet{espana2024,espana2025} developed the adaptive constrained enveloping splines (ACES) methodology. ACES is a non-parametric technique based on an adaptation of the multivariate adaptive regression splines (MARS) algorithm introduced by \citet{friedman1991}.
Unlike conventional enveloping techniques, ACES employs piecewise linear spline functions and a backward pruning procedure guided by generalized cross-validation (GCV), which mitigates overfitting and significantly improves the model's ability to generalize beyond the observed sample.

Finally, in a stochastic framework, chance-constrained DEA \citep{Lan1993, Olesen1995} allows for uncertainty in inputs and outputs. This ensures that the resulting inefficiency scores are obtained with a given degree of confidence, providing a more robust and realistic assessment of efficiency than deterministic DEA. \citet{Cooper1996,Cooper1998,Cooper2002} developed chance-constrained versions of some radial models. More recently, \citet{Bolos2024} introduced chance-constrained directional models.
Another recent approach for imprecise data are fuzzy DEA models, with  Kao–Liu \citep{Kao2000}, Guo–Tanaka \citep{Guo2001} and possibilistic \citep{Leon2003} models as the most relevant. More recently, some chance-constrained fuzzy DEA models were introduced by \citet{Tavana2013}. Fuzzy DEA models have been shown to be very useful for analysing data where uncertainty arises from ambiguity in the variables. However, these methods do not make use of the information derived from the probability distribution of the data.

In this paper, we establish a methodology to apply radial and directional DEA models (classical deterministic, ACES, chance-constrained, fuzzy) to OECD countries as DMUs, using means of performance scores from the PISA 2022 report as outputs. We justify the suitability of this type of ratio variables proving an important and general result about the equivalence of VRS models with ratio variables and CRS models with volume (non-ratio) variables, provided that all variables are ratio variables sharing a common denominator. With respect to the methodological innovations, we establish a criterion to consider the index of \textit{economic social and cultural status} (ESCS) as an input in order to fairly compare the efficiency of educational systems of countries with students of different socio-economic statuses. In addition, we present various approaches for identifying improvement directions and determining targets that reflect the level of difficulty associated with enhancing each performance score. We also examine and propose several new contributions to emerging methodologies that can complement classical radial and directional models, including efficient frontier estimation using adaptive constrained enveloping splines (ACES), stochastic chance-constrained models, and fuzzy models.

The paper is organized as follows: Section \ref{sec:2} describes the data and methodology, introducing the basic concepts of DEA. Section \ref{sec:2.1} specifies the inputs and outputs used in the illustrative example. Section \ref{sec:ratio} examines some issues related to ratio variables in standard DEA models, while Section \ref{sec:escs} provides a methodology for adaptating the ESCS into a suitable DEA input.
Section \ref{sec:models} presents the theoretical and formal framework of the DEA models applied in this study. Classical radial and directional models are presented in Section \ref{sec:detmod}), and the \textit{Theorem of equivalence for ratio variables} is stated and proved in Section \ref{sec:teo}. Recent methodologies such as adaptive constrained enveloping splines (Section \ref{sec:aces_background}), and models with imprecise data (specifically, stochastic chance-constrained models and Kao–Liu fuzzy models in Section \ref{sec:imprecise}) are also considered. To conclude the section on DEA models, we give some ideas to analyse efficient DMUs in Section \ref{sec:anaeff}. Next, Section \ref{sec:orientcoef} introduces methods for estimating orientation coefficients in directional models, based on different criteria: an empirical approach (Section \ref{sec:orientcoef_empirical}) and a criterion based on potential improvement (Section \ref{sec:orientcoef_potential}). Section \ref{sec:results} presents the results of the illustrative example, including specific methodological procedures for applying ACES (Section \ref{sec:res_aces}) and the definition of fuzzy variables based on percentiles (Section \ref{sec:impreciseres}). Finally, Section \ref{sec:conclusions} offers concluding remarks. Data and results tables are provided as supplementary material.

\section{Data and methodology}
\label{sec:2}

We consider $\mathcal{D}=\left\{ \textrm{DMU}_1, \ldots ,\textrm{DMU}_n \right\} $ a set of $n$ DMUs with $m$ inputs and $s$ outputs. Matrices $X=(x_{ij})$ and $Y=(y_{rj})$ are the \emph{input} and \emph{output data matrices}, respectively, where $x_{ij}>0$ and $y_{rj}>0$ refer to the $i$-th input and $r$-th output, respectively, of the $j$-th DMU.

The term \emph{activity} is used to describe any strictly positive vector of $m+s$ components. Any $\textrm{DMU}_o \in \mathcal{D}$ has its associated activity given by $(x_{1o},\ldots ,x_{mo};y_{1o}\ldots ,y_{so})$. Therefore, a DMU can be identified with its activity in the same way that a point is identified with its coordinates.
Given two activities $\mathbf{a}=(x_1,\ldots ,x_m;y_1\ldots ,y_s)$, $\mathbf{a}'=(x'_1,\ldots ,x'_m;y'_1\ldots ,y'_s)$, we say that $\mathbf{a}'$ \emph{dominates} $\mathbf{a}$ if $x'_i\leq x_i$ and $y_r\leq y'_r$ for $i=1,\ldots m$ and $r=1,\ldots ,s$. The relation ``to be dominated by'' is a partial order on the set of activities. Moreover, we say that $\mathbf{a}'$ \emph{strictly dominates} (or \emph{improves}) $\mathbf{a}$ if $x'_i<x_i$ and $y_r<y'_r$ for $i=1,\ldots m$ and $r=1,\ldots ,s$.

The \emph{production possibility set} (PPS) (also known as \textit{production technology}), denoted by $P$, is the set of all \emph{feasible activities} defined by $\mathcal{D}$. Assuming a strong disposability of inputs and outputs, $P$ under variable returns to scale (VRS) is the set formed by activities dominated by convex combinations of DMUs in $\mathcal{D}$:
\begin{equation}
\label{eq:p}
P = \left\{(x_1,\ldots ,x_m;y_1\ldots ,y_s) \in \mathbb{R}_{>0}^{m+s}\ \ \Bigg| \
\begin{array}{ll}
\sum_{j=1}^n\lambda _jx_{ij}\leq x_i,& i=1,\ldots, m,  \\
\sum_{j=1}^n\lambda _jy_{rj}\geq y_r,& r=1,\ldots, s, \\
\sum _{j=1}^n \lambda _j=1,\, \lambda _j\geq 0,& j=1,\ldots ,n
\end{array} \right\} .
\end{equation}
Constant returns to scale (CRS) are assumed by removing the convexity condition in \eqref{eq:p}. 

We say that an activity $\mathbf{a}$ is \emph{efficient} (also known as \emph{strongly efficient} or \emph{Pareto-Koopmans efficient}) if there is not any other activity being feasible and dominating $\mathbf{a}$ \citep{Koop51}. Otherwise, we say that it is \emph{inefficient}.
The set of efficient activities in $P$ is called the \emph{efficient frontier} (also known as \emph{strongly efficient frontier} or \emph{Pareto-Koopmans frontier}) of $P$, and we denote it by $\partial ^{\text{S}}P$.
On the other hand, we say that an activity $\mathbf{a}$ is \emph{weakly efficient} (also known as \emph{technically efficient} or \emph{Farrell efficient}) if there is not any other activity being feasible and strictly dominating $\mathbf{a}$ \citep{Deb51,Far57}. The set of weakly efficient activities in $P$ is called the \emph{weakly efficient frontier} of $P$, and we denote it by $\partial ^{\text{W}}P$. It is clear that $\partial ^{\text{S}}P\subset \partial ^{\text{W}}P$, but they are not the same.
We assume strong disposability of inputs and outputs and, hence, the weakly efficient frontier of $P$ coincides with the boundary of $P$, i.e. $\partial ^{\text{W}}P=\partial P$ \citep{Fare1985}.
Note that this fact makes weakly efficient activities interesting (even if they are inefficient) because they compound the boundary of the PPS, and they can be interpreted as the ``limits of the technology''.

\subsection{Inputs and outputs}
\label{sec:2.1}

In this study, we present an illustrative example in which the DMUs consist of countries that are members of the OECD. Costa Rica, Estonia, New Zealand, Switzerland, and the United Kingdom are excluded from the analysis due to incomplete data availability. Consequently, the sample consists of the following 32 countries:
Australia, Austria, Belgium, Canada, Chile, Colombia, the Czech Republic, Denmark, Finland, France, Germany, Greece, Hungary, Iceland, Ireland, Israel, Italy, Japan, Korea, Latvia, Lithuania, Mexico, the Netherlands, Norway, Poland, Portugal, the Slovak Republic, Slovenia, Spain, Sweden, T\"urkiye, and the United States. 

The outputs considered in this study are the \textit{mean scores in mathematics} \citep[Table I.B1.2.1]{pisa2022}, \textit{reading} \citep[Table I.B1.2.2]{pisa2022}, and \textit{science} \citep[Table I.B1.2.3]{pisa2022}, as published in the 2022 PISA report. These outputs may also be treated as stochastic variables, each with its standard deviation. In the existing literature, such measures constitute the most commonly used outputs for evaluating the efficiency of education systems based on PISA data.

Input selection is not trivial and depends on the kind of analysis that is intended.
Based on the ideas of \citet{Kocak2011} and \citet{Buy2022}, the illustrative example evaluates the efficiency of the educational system by incorporating at least two inputs that represent the efforts of students and the country (human and capital resources):
\begin{itemize}

\item \textit{Instruction time}: compulsory instruction time in general education per student (primary + lower secondary) \citep[Table D1.1]{glance2023}. Mexico data is taken from \citep[Table D1.1]{glance2025}.

\item \textit{Economic effort}: estimated cumulative expenditure on educational institutions per student (from age 6 to 15) \citep[Table B3.2.2]{pisa2022} normalized by the corresponding GDP per capita in 2021 \citep[Table B3.2.1]{pisa2022}. It is interpreted as the relative economic effort per student of the educational system of each country.

\end{itemize}

The aforementioned measure of economic effort enables a comparative assessment of the financial burden associated with sustaining the education system across countries with varying levels of GDP. However, instead of economic effort, one could alternatively consider cumulative expenditure on educational institutions per student without taking into account the GDP. The rationale for this is that the expenditure data reported in PISA are already adjusted using purchasing power parities (PPPs) for GDP, thereby eliminating the bias associated with differences in purchasing power across countries.

Moreover, it is known that the students' economic, social and cultural background significantly affects performance scores \citep[Chapter 4]{pisa2022}\citep{Sirin2005}.  
Hence, in order to remove this bias, we are going to consider the \textit{index of economic social and cultural status (ESCS)} \citep[Table I.B1.4.2]{pisa2022} as another input (see Section \ref{sec:escs}), jointly with \textit{instruction time} and \textit{economic effort}.
The same argument could be applied to select GDP per student as an input, because countries with high GDP are supposed to have better infrastructures and it also affects performance scores. However, the ESCS is more appropriate because it takes social factors into account, rather than focusing solely on economic or resource-related aspects.

In conclusion, a wide variety of inputs can be chosen, even not published in PISA reports. For example, \citet{Buy2022} considers instruction time and several indices of teachers' self-efficacy and satisfaction published in the 2013 Teaching and Learning International Survey (TALIS). \citet{espana2025rf} also consider inputs for human and capital resources, specifically the teacher-student ratio and the quality of school resources, along with the ESCS. The purpose of this paper is not to decide which inputs are the best, but rather to provide a framework for researchers to use radial and directional models with the inputs they deem appropriate. However, some technical aspects in the choice of inputs have to be taken into account, as we discuss in the section below.

\subsection{Ratio variables in DEA}
\label{sec:ratio}

In this section, we comment on some aspects related to the use of ratio variables in standard DEA models, taking into account the issues given by \citet{Olesen2015}. In our case, the use of ratio variables is fully justified in Section \ref{sec:teo} and, specifically in Theorem \ref{thm}.

The outputs considered in this study are the mean scores achieved by the students in several performances. These outputs are considered \textit{ratio variables} rather than \textit{volume variables}, since they are constructed from the sum of the corresponding scores of all the students divided by the number of students. Moreover, the inputs considered in the illustrative example are also ratio variables: instruction time per student, economic effort per student, and the ESCS index of the country, which is computed as the mean of the ESCS of the students in the country. All of them are consistently expressed on a per-student basis, having the number of students of each country in the denominator.

A potential concern arises from the use of ratios as inputs and/or outputs in conventional DEA models. In this case, convex combinations of DMUs may not remain feasible, thereby undermining one of the fundamental assumptions of DEA. Hence, there can be activities in the PPS \eqref{eq:p} that are not feasible in practice or are incorrect. This is especially common when ratio and volume variables are considered together, or the nature of ratio variables is different, such as percentages, averages, proportions, etc. In these cases, it is necessary to define a correct PPS according to real feasible activities, adapting the production assumptions (axioms) stated by \citet{Banker1984} for production technologies:
\begin{itemize}
    \item[]\textbf{Axiom 1} (Feasibility of observed data). DMUs in $\mathcal{D}$ are feasible.
    \item[]\textbf{Axiom 2} (Free disposability). An activity dominated by a feasible activity is also feasible.
    \item[]\textbf{Axiom 3.} (Convexity). Convex combinations of feasible activities are feasible. 
\end{itemize}
This work has been done by \citet{Olesen2015} for a wide variety of ratio variable types, obtaining DEA models that allow the use of ratio measures ``as is'', without any transformation. This was achieved by developing the R-VRS and R-CRS models of production technology, defining a PPS according to adapted axioms. Once the PPS is constructed, different projection methods can be used to obtain different DEA models and efficiency measures. However, although this technique is correct, it can be too restrictive. In this section, we demonstrate that the original axioms proposed by \citet{Banker1984} remain valid in our case, in which all variables are expressed as ratios sharing a common denominator for each DMU, without requiring any modification or adaptation.

Regarding incorrect ratios, this issue is illustrated in \cite[Example 2]{Olesen2015}. The problem arises because, when forming a convex combination of DMUs, the resulting ratio variables generally do not coincide with the correct ratios associated with that combination of DMUs.

Nevertheless, if all variables of a DMU (both inputs and outputs) are expressed as ratios sharing a common denominator (which may differ across DMUs; the number of students in each country, in our case), then the ratios obtained from a convex combination of DMUs are consistent with the correct ratios corresponding to another convex combination of those same DMUs. Let us explain and prove this property in terms of our case:

Let $(p_1,\ldots ,p_n)\in \mathbb{R}^n_{>0}$ be the number of students in $n$ DMUs (countries), and let $(a_1,\ldots ,a_n)\in \mathbb{R}^n_{>0}$ represent average per-student values of some magnitude (performance score, instruction time, expenditure, etc.). Given $\lambda _1,\ldots ,\lambda _n\geq 0$ such that $\sum _{j=1}^n \lambda _j=1$, the convex combination of the averages, $\sum _{j=1}^n \lambda _j a_j$, do not correspond to the real average of the convex combination $\sum _{j=1}^n \lambda _j \textrm{DMU}_j$, which is given by
$\sum _{j=1}^n \lambda _j p_j a_j / \sum _{j=1}^n \lambda _j p_j$.
For example, if DMU$_1$ has $p_1=200$ students with a mean performance score of $a_1=9$, and DMU$_2$ has $p_2=1800$ students with a mean performance score of $a_2=5$, then $0.5\cdot \textrm{DMU}_1+0.5\cdot \textrm{DMU}_2$ has $0.5\cdot p_1+0.5\cdot p_2=1000$ students with a real mean performance score of $(0.5\cdot p_1 a_1+0.5\cdot p_2 a_2)/(0.5\cdot p_1+0.5\cdot p_2)=5.4$, 
and not $0.5\cdot a_1+0.5\cdot a_2=7$.

However, there exist $\tilde{\lambda}_1,\ldots ,\tilde{\lambda}_n\geq 0$ such that
\begin{equation}
\label{eq:lambdaa}
    \sum _{j=1}^n \lambda _j a_j=\sum _{j=1}^n \tilde{\lambda}_j p_j a_j \bigg/ \sum _{j=1}^n \tilde{\lambda}_j p_j,\qquad \sum _{j=1}^n \tilde{\lambda}_j=1.
\end{equation}
The left-hand side in \eqref{eq:lambdaa} is the original convex combination of the averages, while the right-hand side is the correct average of the convex combination $\sum _{j=1}^n \tilde{\lambda}_j \textrm{DMU}_j$. In particular, it can be proved that
\begin{equation}
\label{eq:lambdatilde}
    \tilde{\lambda}_j=\lambda _j\bigg/ p_j \sum _{k=1}^n \frac{\lambda _k}{p_k},\quad j=1,\ldots ,n.
\end{equation}
Returning to our example, although $7$ is not the correct mean performance score of $0.5\cdot \textrm{DMU}_1+0.5\cdot \textrm{DMU}_2$, it is the correct mean performance score of $0.9\cdot \textrm{DMU}_1+0.1\cdot \textrm{DMU}_2$.

According to \eqref{eq:lambdatilde}, note that $\tilde{\lambda}_j$ does not depend on the average values of the magnitude, $a_1,\ldots ,a_n$, and thus, \eqref{eq:lambdaa} also holds for averages of several magnitudes at once. That is, considering $m$ magnitudes, let $a_{rj}>0$ be the average value of the $r$-th magnitude for DMU$_j$, with $r=1\ldots ,m$ and $j=1,\ldots ,n$. Then, $\sum _{j=1}^n\lambda _j\left( a_{1j},\ldots ,a_{mj}\right) $ are the real average values of all $m$ magnitudes for the convex combination $\sum _{j=1}^n \tilde{\lambda}_j \textrm{DMU}_j$. Given that, in our case, all variables (both inputs and outputs) are ratios sharing a common denominator, it follows that the convex combination of ratios $\sum _{j=1}^n\lambda _j\left( x_{1j},\ldots ,x_{mj};y_{1j},\ldots ,y_{sj}\right) $ is correct for the convex combination of DMUs $\sum _{j=1}^n \tilde{\lambda}_j \textrm{DMU}_j$.
Therefore, the PPS \eqref{eq:p} under VRS, which is the set of activities dominated by convex combinations of ratios, is indeed equivalent to the set of activities dominated by correct ratios of convex combinations of DMUs. 

At this stage, it can be established that the original axioms formulated by \citet{Banker1984} are satisfied in our setting under the assumption of VRS. Axiom 1 is fulfilled trivially. Regarding Axiom 2, both instructional time and economic effort are inherently bounded in practice, since neither the number of available hours nor the amount of financial capital can increase without limit or attain arbitrarily large values. However, in practice, these upper bounds are far from being attained in real-world scenarios. A similar argument applies to the ESCS. Therefore, for practical purposes, it is reasonable to assume that Axiom 2 holds.

The homogeneity of all variables considered in this study (in the sense that they are all ratios expressed on a per-student basis) is crucial for the correctness of convex combinations and, consequently, for the validity of Axiom 3.
The resulting PPS \eqref{eq:p} under VRS, constructed from activities dominated by correct ratios of convex combinations of DMUs in $\mathcal{D}$, accurately represents the set of all feasible activities. This framework permits the application of classical radial and directional models under VRS without requiring modifications to accommodate ratio variables.

Moreover, an important result justifying the application of VRS models in our setting is presented in Theorem \ref{thm}. There, we prove that the use of standard radial and directional models under VRS, when all inputs and outputs are ratio variables sharing a common denominator for each DMU (the number of students, in our case), is equivalent to the application of the same models under CRS when the corresponding volume variables are considered (obtained by multiplying ratio variables by their denominator), and the denominator itself is incorporated as a non-controllable variable.

Accordingly, when average performance scores are considered as outputs, only variables expressed on a per-student basis or as student-related ratios are admissible. In addition to the variables analyzed in this paper (namely, instruction time per student, economic effort per student, and the mean value of the ESCS index) other variables such as the GDP per student, the teacher-to-student ratio or the computer-to-student ratio could also be considered.
By contrast, the inclusion of variables of a different nature, such as volume variables, percentages, or indices of teachers’ satisfaction, may generate inconsistencies in the definition of the PPS. In such cases, the guidelines proposed by \citet{Olesen2015} should be followed. 

\subsection{ESCS as an input}
\label{sec:escs}

The ESCS is a composite score that combines into a single score information from three components: parents' highest level of education, parents' highest occupational status, and home possessions (which is a proxy for family wealth). The higher the value of ESCS, the higher the socio-economic status of the student. The ESCS index of each country is estimated as the mean value of the ESCS for all the students in the country.
Since the ESCS is a contextual factor which is external to schools, it is a good input for measuring the intrinsic efficiency of the educational system of each country. Moreover, it is proved that the ESCS affects school performances \citep[Chapter 4]{pisa2022}\citep{Sirin2005} and considering it as an input would eliminate this bias.

Nevertheless, the ESCS index is constructed to have a mean of $0$ across OECD countries, implying that some countries exhibit negative ESCS values. This feature is not suitable for use as an input in conventional DEA models. To address this issue, the ESCS index must be transformed into a strictly positive variable by applying a translation of magnitude $\tau > 0$.

\begin{figure}[htbp]
\centering
  \includegraphics[width=0.75\textwidth]{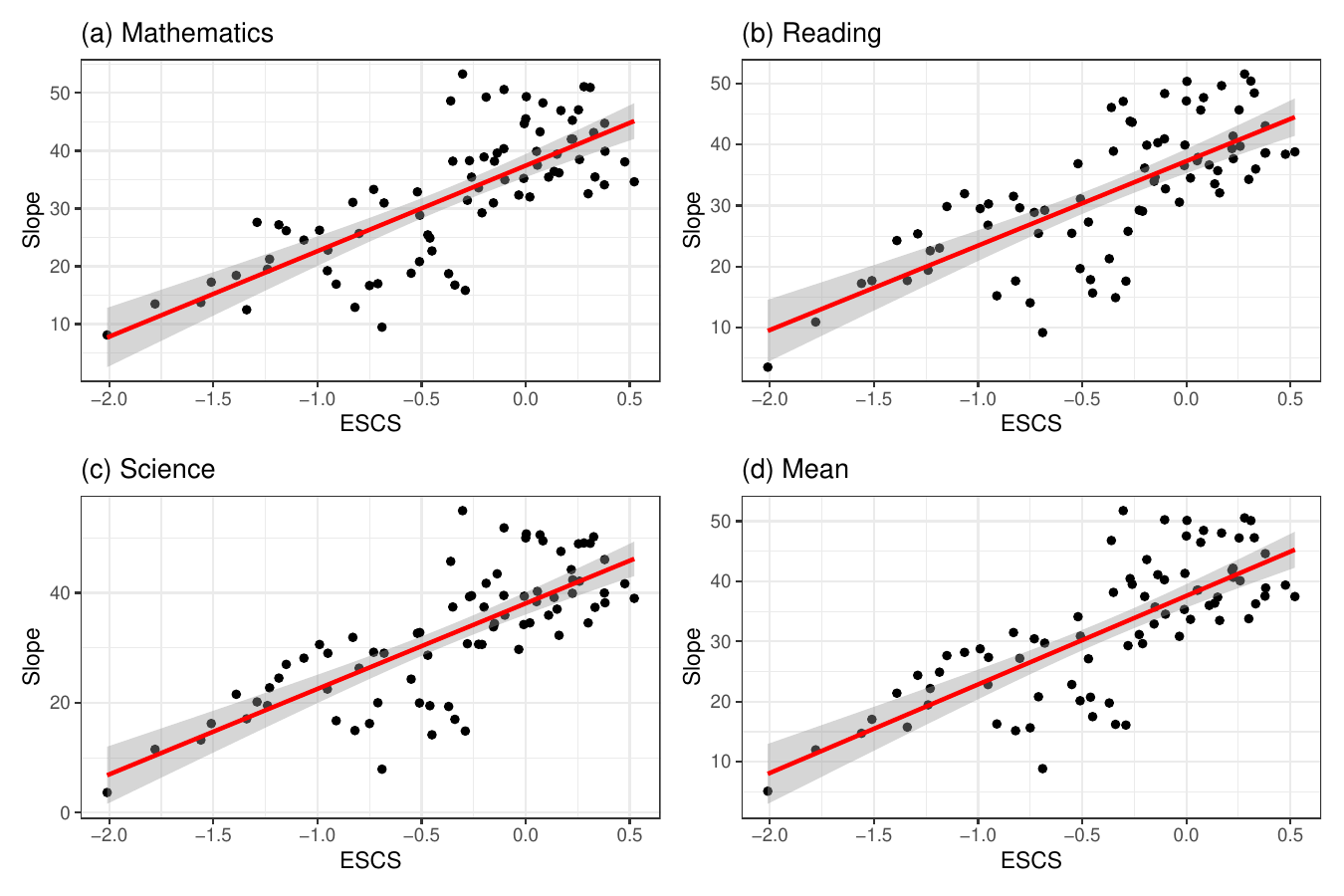}
\caption{ESCS vs (a) slope in mathematics, (b) slope in reading and (c) slope in science performances, and (d) the mean slope of all the performances, for a sample of $80$ countries (OECD members and partners). Regression lines are computed in order to find the translation parameter $\tau$ for transforming the ESCS index into a strictly positive one. The grey zone corresponds to the $95\%$ confidence bands.}
\label{fig:escs}
\end{figure}

To determine an appropriate transformation, we consider the \textit{slope} of the ESCS gradient, as reported in the PISA report \citep[Tables I.B1.4.3, I.B1.4.4, and I.B1.4.5]{pisa2022}. This slope captures the degree of disparity in average performance between two students whose ESCS differs by one unit; that is, it represents the difference in score points (in mathematics, reading, and science) associated with a one-unit increase in ESCS.
As illustrated in Figure~\ref{fig:escs}, the slope tends to decrease as the ESCS index declines. Consequently, there exists a value of ESCS at which the slope approaches zero, indicating that variations in ESCS at this level have a negligible effect on performance scores. For this reason, the ESCS index should be translated so that the new zero corresponds precisely to this zero-slope value.

The translation parameter $\tau$ has been computed through regression lines for the slopes of each performance (mathematics, reading, and science) and for the mean slopes of all performances (see Figure~\ref{fig:escs}), taking into account $80$ countries (OECD members and partners). Specifically, the results are: mathematics ($\tau = 2.53$, $R^2 = 0.56$), reading ($\tau = 2.70$, $R^2 = 0.54$), science ($\tau = 2.45$, $R^2 = 0.58$), and the overall mean ($\tau = 2.55$, $R^2 = 0.58$).





A suitable value for $\tau$ is $2.55$, because it is computed for the mean slopes of all the performances and $R^2=0.58$, which is quite good in social sciences and education. A more conservative choice is to select the maximum of all previous $\tau $ values, i.e. $2.70$. An even more conservative choice is to consider the translation parameter with the highest value, assuming a confidence level of $95\%$. In this case, these values are $3.32$, $3.59$, $3.19$, and $3.33$, for mathematics, reading, science, and mean, respectively. However, it should be noted that the higher the value of the translation parameter, the more attenuated the effect produced by the ESCS when it is entered as input.
We have applied a translation of $\tau=2.55$ in our analysis because all ESCS values become strictly positive with considerable margin in the countries considered as DMUs, with all these values ranging between $1.3679$ for T\"urkiye and $3.0748$ for Norway (see Table \ref{tab:data}). Hence, this translated ESCS index becomes an adequate input for DEA. Moreover, ESCS can be considered stochastic because a standard deviation is provided. This standard deviation is not affected by translations.

\begin{remark}[Non-discretionary ESCS]
\label{rem:escsnd}
The ESCS can be considered as a non-discretionary input, since it is exogenously fixed and therefore, it is not possible to vary it at the discretion of management. Nevertheless, since we are going to consider output-oriented models (see Section \ref{sec:models}), this consideration has no effect.
\end{remark}

Finally, we note that, for computational reasons, it is preferable for all variables to have a similar order of magnitude. Therefore, instruction time is measured in hundreds of hours, expenditure on education is measured in thousands of USD (converted using PPPs for GDP), and the economic effort is multiplied by $100$. Lastly, the (translated) ESCS is also multiplied by $100$.

\section{DEA models}
\label{sec:models}

In this section, we will review classical radial and directional models, as well as some complementary methodologies such as adaptive constrained enveloping splines, chance-constrained models, and Kao--Liu fuzzy meta-models.
Our focus is on improving mean performance scores (outputs), not cutting inputs, in order to reach the boundary of the PPS. So, we are going to consider output-oriented models. Moreover, as it has been discussed above, we are going to work under VRS. All the models are presented in its VRS version; the corresponding CRS version is constructed eliminating the convexity constraint $\sum _{j=1}^n \lambda _j=1$, although it is not appropriate in our context.

Any DEA model in this work calculates a \emph{target} activity in $\partial ^{\text{W}}P$ (following the improvement strategy) and \emph{efficient projections} in $\partial ^{\text{S}}P$.
Moreover, an \emph{efficiency score} $0<\rho ^*\leq 1$ can be computed which equals $1$ for activities in $\partial ^{\text{W}}P$.

\begin{figure}[htbp]
\centering
  \includegraphics[width=0.6\textwidth]{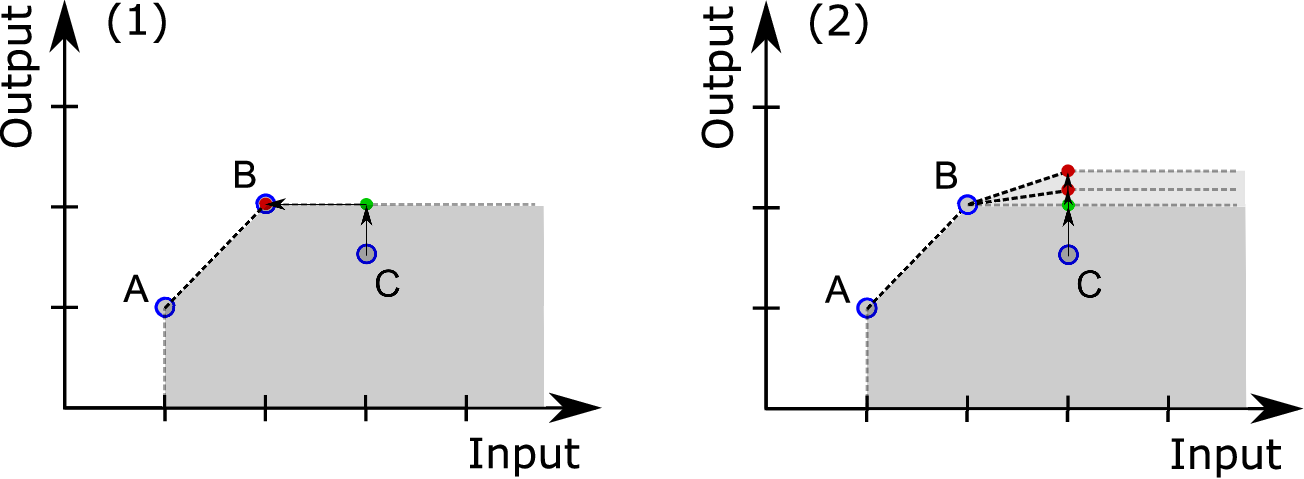}
\caption{Representation of two ways in which a DMU can achieve efficiency if its target is not efficient. We consider an output oriented model with one input and one output under VRS. The PPS $P$ is the grey area. The efficient DMUs are $A$ and $B$, which define the efficient frontier $\partial ^{\text{S}}P$ represented by a black dashed line.  The rest of the weakly efficient frontier $\partial ^{\text{W}}P$ is represented by grey dashed lines. Since the target of $C$ (green circle) is not efficient, $C$ can achieve efficiency by either (1) matching the efficient projection (red circle), or (2) matching the target and improving the output (red circles) in a marginal quantity. In the latter case, the PPS increases in line with the extra improvement in the output (note that this can only be achieved if the nature of the output allows for improvement even if it is in the frontier of the PPS, as in our case); however, there is no need for a reduction in the input.}
\label{fig:tw}
\end{figure}

Taking into account that the outputs are mean performance scores, there exist two alternative ways for a country to attain efficiency when its target in $\partial^{\text{W}}P$ is not efficient (see Figure~\ref{fig:tw}):

\begin{itemize}
\item[(1)] Improve its activity so as to coincide with the efficient projection in $\partial^{\text{S}}P$. In this case, the PPS remains unchanged, but additional slacks in one or more variables (including inputs) may be required. This adjustment may be conceptually inappropriate in certain contexts; for instance, if it entails a reduction in ESCS.

\item[(2)] Improve its activity to reach the target in $\partial^{\text{W}}P$, and subsequently implement an additional improvement in at least one output variable that exhibits no slack relative to the efficient projection. Although this additional improvement may be arbitrarily small, larger increases in the output will result in higher levels of super-efficiency. To preserve the underlying improvement strategy, other outputs may also be proportionally increased according to the output orientation coefficients (see Section~\ref{sec:detmod}). Under this approach, the PPS expands, while outputs improve consistently with the improvement strategy.
\end{itemize}

The expansion of the PPS relative to the original technology may be undesirable in certain settings, particularly in production processes where the level of inputs imposes a theoretical upper bound on achievable outputs. However, this limitation does not apply in the present context, as it is theoretically possible to marginally increase mean performance scores (provided that the maximum score has not been reached) while holding input levels constant.
An example is given for Spain in Section \ref{sec:detvar}.

\subsection{Classical radial and directional models}
\label{sec:detmod}

In this section, we present the classical radial and directional models with deterministic variables. Matrices $X=(x_{ij})$ and $Y=(y_{rj})$ are the \emph{input} and \emph{output data matrices}, respectively, where $x_{ij}>0$ and $y_{rj}>0$ refers to the $i$-th input and $r$-th output, respectively, of the $j$-th DMU. All models evaluate efficiency of a given $\textrm{DMU}_o\in \mathcal{D}$. 

Radial models were originally defined by \citet{Charnes1978}. The output-oriented version under VRS is given by
\begin{equation}
\label{eq:rad}
\begin{array}[t]{rll}
\eta ^*=\max \limits_{\eta ,\lambda _1,\ldots ,\lambda _n} & \eta \\
\textrm{s.t.} & \sum_{j=1}^n\lambda _jx_{ij}\leq x_{io},& \quad i=1,\ldots ,m, \\
& \sum_{j=1}^n\lambda _jy_{rj}-\eta y_{ro}\geq 0,& \quad r=1,\ldots ,s, \\
& \sum _{j=1}^n \lambda _j=1,\quad \lambda _j\geq 0,& \quad j=1,\ldots ,n.
\end{array}
\end{equation}
The target in $\partial ^{\text{W}}P$ is given by
$\left( x_{1o},\ldots ,x_{mo};\eta ^*y_{1o},\ldots ,\eta ^*y_{so}\right) $,
and efficient projections in $\partial ^{\text{S}}P$ are given by
$\sum_{j=1}^n\lambda ^*_j\left( x_{1j},\ldots ,x_{mj};y_{1j},\ldots, y_{sj}\right) $,
with $\lambda ^*_1,\ldots ,\lambda ^*_n $ being optimal for program \eqref{eq:rad}. Then, a \emph{reference set} for $\textrm{DMU}_o$ is given by those $\textrm{DMU}_j$ with $\lambda ^*_j>0$. The \textit{inefficiency slacks} are the absolute value of the differences between the target and the efficient projection.

In radial (and directional) models, target activities are unique; however, multiple efficient projections may exist. To address this multiplicity, a \textit{second-stage} optimization problem is often solved to identify the so-called \textit{max-slack solution}, which maximizes the sum of inefficiency slacks. Nevertheless, efficient projections that are closer to the evaluated DMU imply less demanding adjustments in inputs and outputs for inefficient units to achieve efficiency. Accordingly, \citet{Aparicio2007} proposed the identification of a \textit{min-slack solution}, which minimizes the sum of inefficiency slacks. Notwithstanding these refinements, the classification of a DMU as efficient or weakly efficient does not require the implementation of a second-stage procedure. If $\eta ^*=1$, then $\textrm{DMU}_o$ coincides with its target and hence, $\textrm{DMU}_o$ is weakly efficient. If, in addition, an efficient projection coincides with $\textrm{DMU}_o$, then there is only one efficient projection and $\textrm{DMU}_o$ is efficient. In this case, the target also coincides with $\textrm{DMU}_o$ and all the inefficiency slacks are $0$. On the other hand, if $\eta ^*>1$, then $\textrm{DMU}_o$ is inefficient (and not weakly efficient). In this case, the inefficiency score is $\eta ^*$.

Directional models generalize radial models and were introduced by \citet{Chambers1996,Chambers1998}. Output-oriented directional models are written in terms of the output \textit{direction coefficients} $g_1,\ldots ,g_s\geq 0$ corresponding to the evaluated DMU. However, we are going to consider the formulation given by \citet{Briec1997} in terms of the so called output \textit{orientation coefficients} $d_1,\ldots ,d_s\geq 0$. The relation between the direction and the orientation coefficients is $g_r=d_r y_{ro}$ for $r=1,\ldots ,s$. The output-oriented directional model under VRS is given by
\begin{equation}
\label{eq:dird}
\begin{array}[t]{rll}
\beta ^*=\max \limits_{\beta ,\lambda _1,\ldots ,\lambda _n} & \beta \\
\textrm{s.t.} & \sum_{j=1}^n\lambda _jx_{ij}\leq x_{io},& \quad i=1,\ldots ,m, \\
& \sum_{j=1}^n\lambda _jy_{rj}-\beta d_ry_{ro}\geq y_{ro},& \quad r=1,\ldots ,s, \\
& \sum _{j=1}^n \lambda _j=1,\quad \lambda _j\geq 0,& \quad j=1,\ldots ,n.
\end{array}
\end{equation}
The target in $\partial ^{\text{W}}P$ is given by
$\left( x_{1o},\ldots ,x_{mo};(1+\beta ^*d_1)y_{1o},\ldots ,(1+\beta ^*d_s)y_{so}\right) $,
and efficient projections in $\partial ^{\text{S}}P$ are given by $\sum_{j=1}^n\lambda ^*_j\left( x_{1j},\ldots ,x_{mj};y_{1j},\ldots, y_{sj}\right) $, with $\lambda ^*_1,\ldots ,\lambda ^*_n $ being optimal for program \eqref{eq:dird}. If $\beta ^*=0$, then $\textrm{DMU}_o$ coincides with its target and hence, $\textrm{DMU}_o$ is weakly efficient. If, in addition, there is only one efficient projection and it coincides with $\textrm{DMU}_o$, then $\textrm{DMU}_o$ is efficient. On the other hand, if $\beta ^*>0$, then $\textrm{DMU}_o$ is inefficient. In this case, the inefficiency score is $\beta ^*$. As a homogeneous efficiency score, the corresponding \emph{Farrell oriented efficiency} \citep{Bolos2025} is given by
$\rho ^*=1/\left(1+\beta ^*\frac{1}{s}\sum _{r=1}^s d_r\right) $.
We have $0<\rho ^*\leq 1$, and $\rho ^*=1$ if and only if $\textrm{DMU}_o$ is its own target.

The output orientation coefficients $d_1,\ldots ,d_s$ in program \eqref{eq:dird} determine the proportions by which the outputs of $\textrm{DMU}_o$ are dilated to calculate the target. In the particular case $d_1=\cdots =d_s=d>0$, all the outputs are dilated in the same proportion and hence, we say that the directional model is radial. In fact, we obtain the classical output-oriented radial model taking $\eta =1+\beta d$ and, in this case, the \emph{Farrell radial efficiency} is given by $\rho ^*=1/\eta ^*$. Moreover, if the orientation coefficient of an output is equal to $0$, then that output can not be changed in the calculation of the target.
In any case, the output orientation coefficients characterize the criteria of management chosen by the producer for the evaluated DMU, representing the relative ``ease of improvement'' of each output, and therefore determining an improvement strategy: the higher the orientation coefficient of an output, the easier it is to improve that output.

\subsection{Theorem of equivalence for ratio variables}
\label{sec:teo}

In this section, we assume that all variables (inputs and outputs) are ratio variables with a common denominator for each DMU, namely $p_1,\ldots, p_n$ (the number of students in our case). We will show in Theorem \ref{thm} 
that program \eqref{eq:dird} (and consequently program \eqref{eq:rad}) is equivalent to the program resulting from the corresponding CRS model with volume variables $\tilde{x}_{ij}=x_{ij}p_j$, $\tilde{y}_{rj}=y_{rj}p_j$, and $p_j$ considered as a non-controllable variable for DMU$_j$. Furthermore, this property holds for any ``well constructed'' DEA model.

Non-controllable variables constitute a special category of variables (they can be considered neither inputs nor outputs) that are exogenously determined and cannot be modified. They differ from non-discretionary variables in that two activities can only be compared when they share the same values for the non-controllable variables. By contrast, activities with different values for non-discretionary variables (inputs or outputs) can still be compared, and it is possible to determine whether one dominates the other. Consequently, the introduction of non-controllable variables alters the resulting PPS, whereas non-discretionary variables only affect the efficiency score \citep{Cooper2007}.

\begin{definition}
    We say that a DEA model has an \textit{envelopment} formulation if it computes an efficient projection in $\partial ^{\text{S}}P$ of the form $\sum_{j=1}^n\lambda ^*_j\left( x_{1j},\ldots ,x_{mj};y_{1j},\ldots, y_{sj}\right) $, with $\lambda ^*_1,\ldots ,\lambda ^*_n $ being optimal for the corresponding program model.  
\end{definition}

\begin{definition}
    Let $\Omega ^*_o\left( \mathcal{D}\right) $ denote the optimal objective function of a DEA model under CRS for the evaluation of DMU$_o$ with respect to $\mathcal{D}=\left\{ \textrm{DMU}_1, \ldots ,\textrm{DMU}_n \right\} $. We say that this model is \textit{CRS-invariant} if $\Omega ^*_o\left( \left\{ \alpha _1\textrm{DMU}_1, \ldots ,\alpha _n\textrm{DMU}_n \right\} \right) =\Omega ^*_o\left( \mathcal{D}\right) $ for any $(\alpha _1,\ldots ,\alpha _n)\in \mathbb{R}^n_{>0}$.
\end{definition}

Any ``well-constructed'' unit-invariant model under CRS should be CRS-invariant. For example, assuming CRS, classical radial and directional models are CRS-invariant, but so are SBM models \citep{Tone2001} and non-radial models from \citet{Fare1978}. Moreover, additive models \citep{Charnes1985} are CRS-invariant only if the slack weights in the objective function are chosen such that the model is unit-invariant.

\begin{theorem}[\textbf{Theorem of equivalence for ratio variables}]
\label{thm}
Let us consider a set of $n$ DMUs with $m$ inputs, $s$ outputs, and a non-controllable variable. Then, a CRS-invariant model applied to those $m+s+1$ variables is equivalent to the same model, but under VRS and applied to $m+s$ ratio variables, constructed by dividing the original inputs and outputs by the non-controllable variable. 
\end{theorem}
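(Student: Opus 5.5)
The plan is to use the CRS-invariance with the specific scaling $\alpha_j=1/p_j$. We then show that the CRS program with a non-controllable variable becomes exactly the VRS program on the ratio data.

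Start from the CRS model applied to the volume data $(\tilde x_{1j},\ldots,\tilde x_{mj};\tilde y_{1j},\ldots,\tilde y_{sj};p_j)$, where $\tilde x_{ij}=x_{ij}p_j$ and $\tilde y_{rj}=y_{rj}p_j$. By CRS-invariance, the optimal value $\Omega^*_o$ is unchanged if every $\textrm{DMU}_j$ is replaced by $\alpha_j\textrm{DMU}_j$ with $\alpha_j=1/p_j$. The rescaled data are exactly $(x_{1j},\ldots,x_{mj};y_{1j},\ldots,y_{sj};1)$: the original ratio variables plus a non-controllable variable equal to $1$ for every unit. This includes $\textrm{DMU}_o$ itself, so the evaluated activity is $(x_{1o},\ldots;y_{1o},\ldots;1)$.

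Next I would analyse how the non-controllable variable enters the CRS PPS. Following \citet{Cooper2007}, activities are compared only at equal values of that variable, so the envelopment constraint for it is an equality: $\sum_j\lambda_j\cdot 1=1$. The CRS technology generated by the rescaled DMUs therefore has the extra constraint $\sum_j\lambda_j=1$, $\lambda_j\ge0$. That is precisely the VRS PPS \eqref{eq:p} built on the ratio data. Every other constraint of the model (inputs $\le$, outputs $\ge$, directions $g_r=d_ry_{ro}$, slacks) has the same form in both formulations, so the programs coincide. For instance, \eqref{eq:dird} under CRS on rescaled data plus $\sum_j\lambda_j=1$ is literally \eqref{eq:dird} under VRS. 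The optimal values are then equal. Efficient projections correspond through $\lambda^{\text{ratio}}_j=\lambda^{\text{vol}}_j p_j/p_o$, which matches the $\tilde\lambda$ relation \eqref{eq:lambdatilde}. Targets correspond by multiplying by $p_o$.

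The main obstacle is making the treatment of the non-controllable variable precise and general. The rescaled model's variable equals $1$ only after rescaling $\textrm{DMU}_o$ by $1/p_o$. CRS-invariance guarantees invariance of $\Omega^*_o$, but the target and slacks must also be tracked, so they should be shown to scale by $p_o$. I would state, as part of "well constructed", that the non-controllable variable imposes the equality constraint $\sum_j\lambda_jp_j=p_o$ in the original volume formulation. Substituting $\mu_j=\lambda_jp_j/p_o$ then directly turns this into the convexity constraint. This substitution argument gives a second, self-contained verification for envelopment formulations, independent of CRS-invariance for the constraint set.
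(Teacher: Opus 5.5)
Your proof is correct, but your main argument is organized differently from the paper's. The paper takes the CRS directional program with the non-controllable constraint $\sum_j\tilde\lambda_j p_j=p_o$ and applies the substitution $\lambda_j=\frac{p_j}{p_o}\tilde\lambda_j$, $x_{ij}=\tilde x_{ij}/p_j$, $y_{rj}=\tilde y_{rj}/p_j$, which is exactly your ``second verification''. It divides through by $p_o$ and then simply asserts that the same substitution works for any other CRS-invariant model.

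Your primary route instead applies the definition of CRS-invariance with $\alpha_j=1/p_j$, so the non-controllable variable becomes identically $1$. Its equality constraint is then literally the convexity constraint. By the paper's convention that a VRS model is its CRS version plus $\sum_j\lambda_j=1$, the rescaled program is the VRS program on the ratio data.

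What your route buys is that the CRS-invariance hypothesis is actually used, so the extension beyond directional models becomes an argument rather than an assertion. It needs only two things: the model remains CRS-invariant with $p_j$ included and rescaled, and $p_j$ enters the envelopment formulation solely through $\sum_j\lambda_jp_j=p_o$. What it does not give on its own is correspondence of solutions, because CRS-invariance is a statement about optimal values only. The paper's substitution gives a bijection of feasible sets, and hence the relations between targets (scaling by $p_o$), efficient projections and reference sets. That is why your fallback to the substitution is genuinely needed for full equivalence. Unpacked for an envelopment model, your rescaling is the same computation, with the factor $1/p_o$ coming from rescaling $\textrm{DMU}_o$.

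One small precision: your weights $\lambda_j^{\text{vol}}$ satisfy $\sum_j\lambda_j^{\text{vol}}p_j=p_o$ rather than summing to one. So the match with \eqref{eq:lambdatilde} holds only after normalizing them.
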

\begin{proof}
Let $\tilde{x}_{ij}$, $\tilde{y}_{rj}$ and $p_j$ be the $i$-th input, the $r$-th output and the non-controllable variable, respectively, of the $j$-th DMU. We are going to prove the result for a general directional model under CRS evaluating DMU$_o$. The corresponding program is given by
\begin{equation}
\label{eq:dirdcrs}
\begin{array}[t]{rll}
\beta ^*=\max \limits_{\beta ,\tilde{\lambda} _1,\ldots ,\tilde{\lambda} _n} & \beta \\
\textrm{s.t.} & \sum_{j=1}^n\tilde{\lambda} _j\tilde{x}_{ij}+\beta d_i^{\minus}\tilde{x}_{io}\leq \tilde{x}_{io},& \quad i=1,\ldots ,m, \\
& \sum_{j=1}^n\tilde{\lambda} _j\tilde{y}_{rj}-\beta d_r^{\plus}\tilde{y}_{ro}\geq \tilde{y}_{ro},& \quad r=1,\ldots ,s, \\
& \sum _{j=1}^n \tilde{\lambda} _jp_j=p_o,\quad \tilde{\lambda} _j\geq 0,& \quad j=1,\ldots ,n,
\end{array}
\end{equation}
where $d_i^{\minus}$ and $d_r^{\plus}$ are the input and output orientation coefficients, respectively.
Let us consider
\begin{equation}
\label{eq:tildelambda}
    \lambda _j=\frac{p_j}{p_o}\tilde{\lambda} _j,\qquad x_{ij}=\frac{\tilde{x}_{ij}}{p_j},\qquad y_{rj}=\dfrac{\tilde{y}_{rj}}{p_j}.
\end{equation}
Then, applying \eqref{eq:tildelambda} to program \eqref{eq:dirdcrs}, we obtain
\begin{equation}
\label{eq:tildedirdcrs}
\begin{array}[t]{rll}
\beta ^*=\max \limits_{\beta ,\lambda _1,\ldots ,\lambda _n} & \beta \\
\textrm{s.t.} & \sum_{j=1}^n\lambda _jx_{ij}+\beta d_i^{\minus}x_{io}\leq x_{io},& \quad i=1,\ldots ,m, \\
& \sum_{j=1}^n\lambda _jy_{rj}-\beta d_r^{\plus}y_{ro}\geq y_{ro},& \quad r=1,\ldots ,s, \\
& \sum _{j=1}^n \lambda _j=1,\quad \lambda _j\geq 0,& \quad j=1,\ldots ,n.
\end{array}
\end{equation}
Program \eqref{eq:tildedirdcrs} corresponds to the same directional model as that of program \eqref{eq:dirdcrs}, but under VRS and applied to the ratio inputs $x_{ij}$ and ratio outputs $y_{rj}$, without incorporating the non-controllable variable $p_j$. Since radial models constitute a particular case of directional models, the proof is also valid for radial models.

Furthermore, the same change of variables \eqref{eq:tildelambda} can be used for any other CRS-invariant models (such as SBM models), since the ratio variables in \eqref{eq:tildelambda} are proportional to the original volume variables for each DMU. Moreover, the program constraint for variable $p_j$ to be non-controllable, namely $\sum _{j=1}^n\tilde{\lambda}_jp_j=p_o$, is the same for all models in its envelopment formulation. This constraint is transformed into the convexity VRS condition $\sum_{j=1}^n\lambda_j=1$ applying \eqref{eq:tildelambda}.
\end{proof}

In our case, when the corresponding volume variables $\tilde{x}_{ij}=x_{ij}p_j$, $\tilde{y}_{rj}=y_{rj}p_j$ are used, the assumption of CRS and the treatment of the number of students $p_j$ as a non‑controllable variable are both fully justified.

First, when volume variables are considered, it is reasonable to assume that if inputs and the number of students are doubled, outputs will also double. This is equivalent to assuming that, with twice as many students, the number of schools could also be doubled. Provided that the same teaching methodology is maintained across all schools, students would achieve the same grades. Consequently, the total sum (volume) of all grades would also be twice as large.

Second, the number of students should be treated as a non‑controllable variable because it is exogenously determined, and it is only meaningful to compare activities involving the same number of students. For example, suppose DMU A achieves a grade volume (output) of 100 points (the sum of all grades) with an instruction time volume (input) of 100 hours and 100 students (i.e. $1$ hour per student and $1$ point per student), while DMU B achieves the same grade volume with the same instruction time volume but with 200 students (i.e. $0.5$ hours per student and $0.5$ points per student). In order to show that the number of students is a non-controllable variable when volume variables are considered, we propose two wrong scenarios:
\begin{itemize}
    \item The number of students is a controllable input: then, A dominates B. So, A is considered better, because it obtains the same amount of outputs with fewer inputs.
    \item The number of students is a controllable output: then, B dominates A. So, B is considered better.
\end{itemize}
However, both conclusions are incorrect because, in practice, we cannot directly determine which DMU is better. To make a valid comparison, we must consider proportional (scaled) activities with the same number of students. For instance, if we scale up A to 200 students, it would achieve a grade volume of 200 points with an instruction time volume of 200 hours. This scaled activity can now be meaningfully compared with B, since both involve 200 students. We then observe that the scaled A achieves a higher grade volume (200 vs. 100), whereas B uses a lower volume of instruction time (100 vs. 200). Therefore, neither activity dominates the other.

Finally, we conclude that the use of radial and directional VRS models is fully justified in our case because all inputs and outputs are ratio variables with a common denominator, namely the number of students. Furthermore, it is not necessary to know the number of students, unlike when applying the equivalent CRS model with volume variables.

\subsection{Adaptive constrained enveloping splines}
\label{sec:aces_background}

In this subsection, we briefly present the ACES methodology introduced by \citet{espana2024} and further developed by \citet{espana2025}. ACES is a non-parametric frontier estimation method based on an adaptation of the multivariate adaptive regression splines algorithm of \citet{friedman1991}. In multi-input and multi-output settings, such as PISA evaluations, the method follows three main steps. First, it estimates one regularized upper frontier for each output, ensuring that the observed sample is enveloped from above. Second, it applies a conservative refinement step to avoid artificial overestimations in the multi-output setting. Finally, a DEA-type technology under VRS is constructed by replacing the original outputs with the refined ACES predictions. Standard efficiency measures, including radial and directional models, can then be computed with respect to this regularized technology.

ACES first expands the original input vector by allowing interaction terms among inputs. Let $\mathfrak{g}(\boldsymbol{x})\in\mathbb{R}^{m^\ast}_{>0}$ denote the augmented vector obtained from the original $m$ inputs and the selected multiplicative interactions, where $m^\ast$ is the total number of generated features. The frontier is then represented through hinge basis functions (BFs). For each component $x_i$ of the augmented vector and each knot $k$, ACES considers the functions $B^+(x_i;k)=(x_i-k)_+$, $B^-(x_i;k)=(k-x_i)_+$, where the subscript $( \cdot )_+$ denotes the positive part operator, i.e., $\max\{0, \cdot \}$. The candidate knots are taken from the observed values of each variable, so that the model can adapt locally to the empirical structure of the data.

For each output $r=1,\ldots,s$, ACES estimates a frontier
$\hat f^{\text{ACES}}_r:\mathbb{R}_{>0}^{m^\ast}\to\mathbb{R}_{>0}$. All output-specific frontiers share the same selected BF and knot structure, although their coefficients may differ. Hence, for each DMU $j$,
\begin{equation*}
\hat f^{\text{ACES}}_r(\mathfrak{g}(\boldsymbol{x}_j)) =
\tau^{(r)} +
\sum_{i=1}^{m^{\ast}}\sum_{h\in H_i}\alpha^{(r)}_{i_h}
\bigl(x_{ij}-\kappa^{(R)}_{i_h}\bigr)_+ +
\sum_{i=1}^{m^{\ast}}\sum_{w\in W_i}\beta^{(r)}_{i_w}
\bigl(\kappa^{(L)}_{i_w}-x_{ij}\bigr)_+,
\end{equation*}
where $\tau^{(r)}$ is the intercept, $\alpha^{(r)}_{i_h}$ and $\beta^{(r)}_{i_w}$ are slope coefficients, and $H_i$ and $W_i$ denote the sets of active right- and left-sided BFs for feature $i$.

The algorithm combines a forward construction stage and a backward pruning stage. Starting from a constant model, the forward stage iteratively adds the reflected pair of BFs that produces the largest improvement according to a lack-of-fit criterion, such as the mean squared error (MSE) or mean absolute error (MAE). This process stops when the maximum number of BFs, $T_{\max}$, is reached or when the relative improvement falls below a threshold $\xi$. Since the forward stage may generate an overfitted model, the backward stage removes BFs sequentially and selects the final specification through the generalized cross-validation criterion
\begin{equation}
\label{eq:gcv}
\text{GCV} =
\left.
\left(
\frac{1}{n} \sum_{j=1}^{n} \sum_{r=1}^{s}
\left(y_{rj} - \hat{f}_r(\mathfrak{g}(\boldsymbol{x}_j))\right)^2
\right)
\middle/
\left(1-\frac{C(\mathcal{B}) + \gamma \cdot \chi(\mathcal{B})}{n}\right)^2
\right.
\end{equation}
where $C(\mathcal{B})$ is the number of fitted coefficients, $\chi(\mathcal{B})$ is the number of knots, and $\gamma$ controls the penalty for model complexity.

At each step of the forward and backward stages, the coefficients are estimated by solving a linear programming problem. The model minimizes a weighted sum of non-negative residuals while imposing envelopment, monotonicity and concavity. Since the estimator is piecewise linear, its partial derivative with respect to each input is constant within each interval defined by the selected knots (for details, see \cite{espana2025}. The optimization problem is
\begin{equation} \label{eq:aces_multioutput_lp}
\begin{aligned}[t]
\min_{\substack{\boldsymbol{\varepsilon}, \{\tau^{(r)}, \alpha^{(r)}, \beta^{(r)}\}_{r=1, \dots, s.}}} \quad & \sum_{j=1}^{n} \omega_j \sum_{r=1}^{s} \varepsilon_{rj} \\
\text{s.t.} 
\quad & \hat f^{\text{ACES}}_r \bigl(\mathfrak{g}(\boldsymbol{x}_j)\bigr) - \varepsilon_{rj} = y_{rj}, && \forall j, \forall r, \\
& \varepsilon_{rj} \ge 0, && \forall j, \forall r, \\
& \left.\partial_{i}\hat f^{\text{ACES}}_r\right|_{v} \ge 0, && \forall r, \forall i, v=1,...,|K_i|+1, \\
& \left.\partial_{i}\hat f^{\text{ACES}}_r\right|_{v} - \left.\partial_{i}\hat f^{\text{ACES}}_r\right|_{v+1} \ge 0, && \forall r, \forall i, v=1,...,|K_i|.
\end{aligned}
\end{equation}

The objective function employs inverse-efficiency weights ($\omega_j = 1/\eta_j$) to prioritize observations near the technological boundary, preventing distortions from highly inefficient interior units. The constraints guarantee envelopment ($\varepsilon_{rj} \ge 0$), forcing the estimated frontier to act as a strict theoretical ceiling consistent with technical inefficiency. Furthermore, the core microeconomic axioms are imposed dimensionally across the knots: non-decreasing monotonicity via $\left.\partial_{i}\hat f^{\text{ACES}}_r\right|_{v} \ge 0$, and concavity (non-increasing marginal returns) via $\left.\partial_{i}\hat f^{\text{ACES}}_r\right|_{v} - \left.\partial_{i}\hat f^{\text{ACES}}_r\right|_{v+1} \ge 0$.

A specific difficulty in the multi-output version of ACES is that output frontiers are estimated separately. Consequently, the vector of predicted outputs may not always be consistent with the joint production structure. To avoid artificial overestimations, \citet{espana2025} propose the following refinement rule:
\begin{equation} \label{eq:refinement}
    \hat{y}_{rj} = \begin{cases} 
    \hat{f}_r^{\text{ACES}}(\mathfrak{g}(\boldsymbol{x}_j)) & \text{if }\quad | \eta ^*_j - \eta ^{*(r)}_j | < \psi \\
    y_{rj} & \text{if }\quad | \eta ^*_j - \eta ^{*(r)}_j | \geq \psi 
    \end{cases},
    \qquad r=1,\ldots ,s,\quad j=1,\ldots ,n,
\end{equation}
where $\eta ^*_j$ is the radial output-oriented efficiency score obtained with all outputs, $\eta ^{*(r)}_j$ is the corresponding score obtained using only output $r$, and $\psi$ is a tolerance threshold.

Once the refined outputs $(\hat{y}_{1j}, \dots, \hat{y}_{sj})$ are obtained for each $\textrm{DMU}_j$, artificial DMUs are constructed by combining the original inputs with the refined outputs. The resulting set $\hat{\mathcal{D}}$ defines a new VRS PPS that can be used to evaluate any $\textrm{DMU}_o\in\mathcal{D}$ by considering $\hat{\mathcal{D}}\cup{\textrm{DMU}_o}$ as the reference set. Since this technology is convex and satisfies free disposability by construction, standard technical efficiency measures, including radial, directional and slacks-based models, can be computed with respect to a regularized frontier consistent with economic theory.

The ACES algorithm relies on hyperparameters to balance frontier flexibility and theoretical consistency. In the forward stage, $T_{\max}$ and $\xi$ limit the maximum number of BF and the minimum relative improvement for inclusion, respectively. Admissible knots are governed by \textit{minspan} and \textit{endspan}, while $q_{\max}$ and $\xi^{(q)}$ control interaction terms. During pruning, $\gamma$ scales the GCV complexity penalty. While coefficients are estimated via the constrained LP \eqref{eq:aces_multioutput_lp} to minimize non-negative residuals, forward selection ranks candidate BF using a specified lack-of-fit (\textit{LOF}) metric, such as MAE or MSE. Finally, the multi-output refinement threshold $\psi$ in \eqref{eq:refinement} conservatively dictates whether an ACES prediction is retained. Because the resulting virtual DEA-type technology intrinsically satisfies standard axioms, imposing monotonicity and concavity during the initial fit serves as a structural tool to enhance frontier estimation rather than a strict theoretical requirement.

Finally, it should be noted that hyperparameter selection requires validation tailored to frontier estimation. Because frontier models recover an upper boundary where residuals reflect latent inefficiency rather than noise, standard cross-validation often overfits inefficient interior observations and distorts the frontier. Thus, ACES validation must employ frontier-oriented criteria, such as weighted errors or trimming strategies focused on observations near the empirical boundary.

\subsection{Models with imprecise data}
\label{sec:imprecise}

In this section, uncertainty will be handled with two different approaches: stochastic and fuzzy models. In the stochastic case, we will use chance-constrained models, while on the latter, we will consider Kao--Liu fuzzy meta-models. 

In chance-constrained models, the constraints hold with a probability greater than or equal to $1-\alpha$ (with $0<\alpha \leq 0.5$), assuming that inputs and outputs are stochastic variables. Hence, the chance-constrained version of \eqref{eq:rad} is given by
\begin{equation}
\label{eq:radcc}
\begin{array}[t]{rll}
\eta ^*_{\textrm{CC}}=\max \limits_{\eta ,\lambda _1,\ldots ,\lambda _n} & \eta \\
\textrm{s.t.} & P\left\{ \sum_{j=1}^n\lambda _j\tilde{x}_{ij}\leq \tilde{x}_{io}\right\} \geq 1-\alpha ,& \quad i=1,\ldots ,m, \\
& P\left\{ \sum_{j=1}^n\lambda _j\tilde{y}_{rj}-\eta \tilde{y}_{ro}\geq 0\right\} \geq 1-\alpha ,& \quad r=1,\ldots ,s, \\
& \sum _{j=1}^n \lambda _j=1,\quad \lambda _j\geq 0,& \quad j=1,\ldots ,n,
\end{array}
\end{equation}
where $\tilde{x}_{ij}$ and $\tilde{y}_{rj}$ represent the $i$-th input and the $r$-th output of the $j$-th DMU. Usually, inputs and outputs are considered to be correlated random variables following multivariate normal distributions. In this way, the deterministic equivalent of \eqref{eq:radcc} is introduced by \citep{Cooper1996}.

Analogously, the chance-constrained version of \eqref{eq:dird} is given by
\begin{equation}
\label{eq:dirdcc}
\begin{array}[t]{rll}
\beta ^*_{\textrm{CC}}=\max \limits_{\beta ,\lambda _1,\ldots ,\lambda _n} & \beta \\
\textrm{s.t.} & P\left\{ \sum_{j=1}^n\lambda _j\tilde{x}_{ij}\leq \tilde{x}_{io}\right\} \geq 1-\alpha ,& \quad i=1,\ldots ,m, \\
& P\left\{ \sum_{j=1}^n\lambda _j\tilde{y}_{rj}-\beta d_r\tilde{y}_{ro}\geq \tilde{y}_{ro}\right\} \geq 1-\alpha ,& \quad r=1,\ldots ,s, \\
& \sum _{j=1}^n \lambda _j=1,\quad \lambda _j\geq 0,& \quad j=1,\ldots ,n.
\end{array}
\end{equation} 
The deterministic equivalent of \eqref{eq:dirdcc} is introduced by \citep{Bolos2024}, for which \textit{stochastic directions} are considered. Note that we consider the improvement strategy for outputs to be given in terms relative to the stochastic performance scores, and therefore the direction of improvement, given by $\left( d_1\tilde{y}_{1o},\ldots ,d_s\tilde{y}_{so}\right) $, is also stochastic. Moreover, it is worth noting that the deterministic equivalents of \eqref{eq:radcc} and \eqref{eq:dirdcc} are quadratic optimization problems, which entail a higher computational cost compared to linear models.

Fuzzy models can handle imprecise data in the form of fuzzy numbers.
A \emph{fuzzy set} $A$ is defined by a function $\mu _A:\mathbb{R}\rightarrow \left[ 0,1\right] $, called \emph{membership function}. 
For any $x\in \mathbb{R}$, the value $\mu _A(x)$ can be interpreted as the grade of membership of $x$ to $A$. Alternatively, a fuzzy set is completely determined by the so-called $\alpha$-\emph{cuts} which are defined by $A^{\alpha}=\left\{ x\in \mathbb{R}\,\,|\,\,\mu_A(x) \geq \alpha \right\} $ for $\alpha \in \left] 0,1\right]$, and $A^0=\overline{\left\{ x\in \mathbb{R}\,\,|\,\,\mu_A(x)>0\right\}}$, where the overline denotes clausure.
\emph{Fuzzy numbers} are a particular case of fuzzy sets verifying $A^{\alpha}$ convex (i.e., intervals) for $\alpha \in \left[ 0,1\right] $ and $A^1\neq \emptyset$ (normalized).

The most relevant fuzzy DEA models are Kao–Liu \citep{Kao2000}, Guo–Tanaka \citep{Guo2001} and possibilistic \citep{Leon2003} models. However, Guo-Tanaka models can only be applied to triangular symmetric data under CRS, and possibilistic models can only be applied to trapezoidal data, which is not our case as we are going to see in Section \ref{sec:impreciseres}.
On the other hand, Kao--Liu models consist on applying an existing model in the worst and best scenarios for the evaluated DMU at each $\alpha$-cut. Therefore, they can be considered meta-models. The worst scenario for a DMU occurs when it uses the largest possible input amounts and obtains the smallest possible output amounts, while, on the contrary, the rest of the DMUs use the smallest possible input amounts, obtaining the largest possible output amounts. For example, considering the radial model \eqref{eq:rad}, the worst scenario for DMU$_o$ at a given $\alpha$-cut is determined by
\begin{equation}
\label{eq:radworst}
\begin{array}[t]{rll}
\eta ^*_{\textrm{W}}=\max \limits_{\eta ,\lambda _1,\ldots ,\lambda _n} & \eta \\
\textrm{s.t.} & \sum_{j=1,j\neq o}^n\lambda _jx^{\textrm{L}}_{ij}+\lambda _o x^{\textrm{U}}_{io}\leq x^{\textrm{U}}_{io},& \quad i=1,\ldots ,m, \\
& \sum_{j=1,j\neq o}^n\lambda _jy^{\textrm{U}}_{rj}+\lambda _o y^{\textrm{L}}_{ro}-\eta y^{\textrm{L}}_{ro}\geq 0,& \quad r=1,\ldots ,s, \\
& \sum _{j=1}^n \lambda _j=1,\quad \lambda _j\geq 0,& \quad j=1,\ldots ,n,
\end{array}
\end{equation}
where superscripts L and U denote the lower and upper extremes, respectively, of the corresponding $\alpha$-cut for each fuzzy variable.
On the other hand, the best scenario for a DMU occurs when it uses the smallest possible input amounts and obtains the largest possible output amounts, while the rest of the DMUs use the largest possible input amounts, obtaining the smallest possible output amounts. Considering the radial model \eqref{eq:rad}, the best scenario for DMU$_o$ at a given $\alpha$-cut is determined by
\begin{equation}
\label{eq:radbest}
\begin{array}[t]{rll}
\eta ^*_{\textrm{B}}=\max \limits_{\eta ,\lambda _1,\ldots ,\lambda _n} & \eta \\
\textrm{s.t.} & \sum_{j=1,j\neq o}^n\lambda _jx^{\textrm{U}}_{ij}+\lambda _o x^{\textrm{L}}_{io}\leq x^{\textrm{L}}_{io},& \quad i=1,\ldots ,m, \\
& \sum_{j=1,j\neq o}^n\lambda _jy^{\textrm{L}}_{rj}+\lambda _o y^{\textrm{U}}_{ro}-\eta y^{\textrm{U}}_{ro}\geq 0,& \quad r=1,\ldots ,s, \\
& \sum _{j=1}^n \lambda _j=1,\quad \lambda _j\geq 0,& \quad j=1,\ldots ,n.
\end{array}
\end{equation}
In this paper, radial and directional models given by \eqref{eq:rad} and \eqref{eq:dird}, respectively, are applied in Kao--Liu meta-models. 

\subsection{Analysis of efficient DMUs}
\label{sec:anaeff}

Let $\textrm{DMU}_j\in \mathcal{D}$ be efficient, a way to estimate its ``influence'' on the rest of DMUs is to count in how many reference sets of inefficient DMUs it appears, i.e. for how many inefficient DMUs there is a strictly positive optimal value $\lambda ^*_j$. A more accurate way to estimate this influence is to calculate the mean of all these optimal values $\lambda ^*_j$ when we vary the DMU to be evaluated. It can be expressed as a percentage, the sum of which is $100\%$ under VRS (see Table \ref{tab:lambdas}, last row). This value, interpreted as \textit{influence}, represents the relative importance of each efficient DMU in the reference sets of inefficient DMUs, and hence in the construction of efficient projections of other DMUs.

Another feature to consider in efficient DMUs is how much surplus they have, and it can be studied applying \emph{super-efficiency} models \citep{Andersen1993}. 
Nevertheless, the calculation of super-efficiency scores of some efficient DMUs may reveal unfeasibilities under VRS, specifically those DMUs that have some input or output better than the rest of the DMUs.

\section{Output orientation coefficients}
\label{sec:orientcoef}

The output orientation coefficients $d_1,\ldots ,d_s$ can be interpreted as the relative ``ease of improvement'' of each output of the evaluated DMU, and therefore they establish an improvement strategy for $\textrm{DMU}_o$, determining the proportions by which outputs are dilated in the calculation of the target.

Deciding the value of these coefficients is not trivial and, although they do not affect efficient or weakly efficient DMUs, they significantly influence the calculation of targets for inefficient DMUs. We are going to present several methods for determining the output orientation coefficients by means of quantifying the relative ``ease of improvement'' of each performance score for each DMU based on data in PISA. Nevertheless, it should be noted that the aforementioned methods are merely illustrative of the estimation of these parameters. Depending on the researcher's criteria or the information available, there may be a multitude of ways to estimate them. 

\subsection{Empirical: annual rate of change}
\label{sec:orientcoef_empirical}

Let $\dot{y}_{rj}$ be the estimated rate of change of the $r$-th otuput of $\textrm{DMU}_j$, that we consider annual taking years as time units. This information is provided by the PISA 2022 report \citep{pisa2022} for mathematics (Table I.B1.5.4.), reading (Table I.B1.5.5.) and science (Table I.B1.5.6.) performances, under the name ``Annual rate of change in 2022 (linear term)''. According to this report, they are computed only for countries with comparable data in more than four PISA assessments, and they are calculated by a regression of performance over linear terms of the gap between the year of assessment and 2022. In our case, there is no problem because this rate is computed for all countries of OECD.

So, we must decide the values of $d_1,\ldots ,d_s$, interpreted as the relative ease of improvement of each output of $\textrm{DMU}_o$, based on the rates of change $\dot{y}_{rj}$ for $r=1,\ldots ,s$ and $j\in eval$, where $eval\subseteq \left\{ 1,\ldots ,n\right\} $ contains the indices of the DMUs that we want to evaluate and compare with $\textrm{DMU}_o$. Since $\dot{y}_{rj}$ are determined from previous experience, we can say that this approach is empirical.

First, we normalise the rates of change dividing by the values of the corresponding outputs, i.e. considering $\dot{y}_{rj}/y_{rj}$ instead of $\dot{y}_{rj}$. Table \ref{tab:dir} shows these relative annual rates of change in 2022 of each performance.

Next, we set an \emph{orientation threshold}, $\delta \geq 0$, and define
$m_{eval}\defeq \min _{\substack{r=1,\ldots ,s\\ j\in eval}}\left\{ \dot{y}_{rj}/y_{rj}\right\}$.
If $m_{eval}\geq \delta $, then we define
$d_r\defeq \frac{\dot{y}_{ro}}{y_{ro}}$, for $r=1,\ldots s$.
On the other hand, if $m_{eval}<\delta $, then we define
\begin{equation}
\label{eq:diremp}
d_r\defeq \frac{\dot{y}_{ro}}{y_{ro}}+\delta -m_{eval},\quad r=1,\ldots s.
\end{equation}
In this way, it is assured that $d_r\geq \delta $ for all $r=1,\ldots ,s$.

The orientation threshold $\delta $ allows us to compare relative rates of change when there are non positive (or, more precisely, lower than $\delta $) values, translating them to become greater than or equal to $\delta $. If $m_{eval}\geq \delta $, then $\delta $ is irrelevant. But if $m_{eval}<\delta $, then we translate all the relative rates of change by $\delta -m_{eval}$, bringing the lowest to $\delta $ and thus, converting non positive rates into positive. If $\delta =0$ then the orientation coefficient of the output with the lowest relative rate of change is $0$ and hence, this output does not change in the calculation of the target. As $\delta $ increases, the orientation coefficients become more relatively homogeneous and the model tends to be radial (output-oriented).

It is recommended to take the same orientation threshold for all the DMUs that we want to evaluate. In this way, the researcher must choose an appropriate $\delta $ that  reflects the differences in the ease of improvement of each output of each $\textrm{DMU}_j$ with $j\in eval$, according to the relative annual rates of change of the outputs. In most cases, a balanced choice of $\delta $ is around half of the maximum value of the direction coefficients in the case $\delta =0$, but it may depend on the nature of the variables. For example, Table \ref{tab:dir} shows the output orientation coefficients of all DMUs, computed with orientation threshold $\delta =1\%$. In this case, the maximum value of a direction coefficient with $\delta =0$ is $2.152\%$ (T\"urkiye $d_3$) and $\delta =1\%$ is a reasonable choice. 

\subsection{Potential: high and low percentile differences}
\label{sec:orientcoef_potential}

Since outputs are stochastic, a more theoretical method for estimating the relative ease of improvement of each output can be based on high and low percentiles. The underlying idea is that the greater the distance between a high percentile (90th or 75th percentile, for example) and the median, the greater the theoretical margin for improvement of the output. Moreover, we can also take into account the corresponding symmetric low percentile (10th or 25th percentile), which will act as a counterweight to this theoretical improvement.
All these data, i.e. 10th, 25th, 50th (median), 75th and 90th percentiles, are provided by the PISA 2022 report \citep{pisa2022} for mathematics (Table I.B1.2.1.), reading (Table I.B1.2.2.) and science (Table I.B1.2.3.) performances (see Table \ref{tab:fuzzyper}).

In this section, we will use as high and low percentiles the $90$th and $10$th percentiles, respectively. However, it can be also considered the $75$th and $25$th percentiles, giving more attenuated differences. Let us denote by $\textbf{P}^{rj}_x$ the $x$th percentile of the $r$th output of $\textrm{DMU}_j$. We define
$h_{rj}\defeq \frac{\textbf{P}^{rj}_{90}-\textbf{P}^{rj}_{50}}{\textbf{P}^{rj}_{50}}$, and $l_{rj}\defeq \frac{\textbf{P}^{rj}_{50}-\textbf{P}^{rj}_{10}}{\textbf{P}^{rj}_{50}}$,
for $r=1,\ldots ,s$ and $j=1,\ldots ,n$.
In a first approach, we can take $h_{rj}$ as an indicator of the relative capacity of improvement, without taking into account $l_{rj}$. Since this indicator is always positive, we define for $\textrm{DMU}_o$
\begin{equation}
\label{eq:p90}
d_r\defeq h_{ro},\qquad r=1,\ldots ,s.
\end{equation}
Table \ref{tab:dir} shows these orientation coefficients for all DMUs.

However, if we want to take into account low percentiles, we must consider $h_{rj}-l_{rj}$ as indicator instead of $h_{rj}$. We have to take into account that this new indicator can be negative and hence, we must apply a methodology analogous to that of the previous section. That is, we set an orientation threshold $\delta \geq 0$ and define
$m_{eval}\defeq \min _{\substack{r=1,\ldots ,s\\ j\in eval}}\left\{ h_{rj}-l_{rj} \right\} $,
where $eval\subseteq \left\{ 1,\ldots ,n\right\} $ contains the indices of the DMUs that we want to evaluate and compare with $\textrm{DMU}_o$. If $m_{eval}\geq \delta $, we define
$d_r\defeq h_{ro}-l_{ro}$, for $r=1,\ldots ,s$.
On the other hand, if $m_{eval}<\delta $, we define
\begin{equation}
\label{eq:p9010}
d_r\defeq h_{ro}-l_{ro}+\delta -m_{eval},\qquad r=1,\ldots ,s.
\end{equation}
Table \ref{tab:dir} shows the output orientation coefficients of all DMUs, computed with orientation threshold $\delta =5\%$. According to the criterion used in the previous section, this value of $\delta $ is a reasonable choice. It should also be noted that mathematics is the performance area in which it is easiest for almost all countries to improve.

\section{Analysis and results of the illustrative example}
\label{sec:results}

In this section, we will apply radial and directional DEA models with deterministic and imprecise variables (chance-constrained and Kao--Liu fuzzy models), together with the ACES framework, demonstrating the key findings. Furthermore, the analysis is conducted both with ESCS included as an input and without its consideration, allowing for a comparative assessment of the results. It is important to note that efficient projections are unique, and that weakly efficient countries are classified as efficient across all models applied to our dataset of 32 OECD countries. However, these properties do not necessarily hold in general. All computations are made using R 4.5.1 \citep{R25} with package deaR 1.5.4 \citep{deaR25, deaRSX} for radial, directional and Kao--Liu fuzzy models, and package SdeaR 1.0.2 \citep{SdeaR25} for chance-constrained models. The ACES-based results were obtained from our own implementation in R using functions available in our public repository at \url{https://github.com/Victor-Espana/aces}. All data and scripts are publicly available at \url{https://www.uv.es/vbolos/investigacion/script_PISA.zip}, enabling full replication of the results.

\subsection{Classical radial and directional models}
\label{sec:detvar}

We observe no significant differences between the efficiency scores $\rho^*$ obtained from radial models \eqref{eq:rad} and those derived from directional models \eqref{eq:dird} (see Table \ref{tab:res}). The set of efficient countries (defined by $\rho^* = 1$ and targets coinciding with their efficient projections, see Tables \ref{tab:targetsvrs13}, \ref{tab:targetsvrs134}, \ref{tab:lambdas}) comprises Ireland, Japan, Korea, Poland, and T\"urkiye. 
This outcome can be partially explained by the fact that Ireland exhibits the lowest level of economic effort, Poland the lowest amount of instruction time, and T\"urkiye the lowest ESCS. Furthermore, T\"urkiye remains efficient even when ESCS is excluded as an input, owing to its comparatively low levels of both instruction time and economic effort. 
With respect to outputs, Japan attains the highest performance scores, while Korea ranks second, achieving slightly lower outcomes but with less instruction time than Japan.

Regarding the influence that each efficient country has in the reference set of other countries, Japan and Ireland exert the greatest influence. Specifically, Japan accounts for $41.21\%$ (without ESCS) and $52.04\%$ (with ESCS), while Ireland contributes $37.34\%$ (without ESCS) and $19.83\%$ (with ESCS) (see Table \ref{tab:lambdas}). Notably, Ireland’s influence declines when ESCS is incorporated as an input, whereas Japan’s influence increases under this specification. 
Poland also displays a moderate and stable level of influence, with values of $13.58\%$ (without ESCS) and $13.10\%$ (with ESCS). In contrast, Korea and T\"urkiye exhibit the lowest levels of influence. Korea accounts for $7.76\%$ (without ESCS) and $3.07\%$ (with ESCS), while T\"urkiye contributes $0.12\%$ (without ESCS) and $11.96\%$ (with ESCS). The substantial increase in T\"urkiye’s influence when ESCS is included as an input is particularly noteworthy. 
By contrast, Korea’s relatively limited influence may be attributed to the dominance of Japan, which demonstrates very similar, albeit slightly superior, performance outcomes.

The efficient frontier is determined by these five efficient countries, and the inefficiency of the remaining countries is assessed relative to this benchmark. Countries located close to the PPS frontier (with $\rho^* \geq 0.95$ in at least one model; see Table \ref{tab:res}) include Latvia $(0.990)$, Canada $(0.984)$, Finland $(0.981)$, the United States $(0.981)$, Australia $(0.966)$, the Czech Republic $(0.966)$, Hungary $(0.963)$, Slovenia $(0.960)$, and Sweden $(0.952)$. Belgium $(0.957)$, the Netherlands $(0.957)$, and Denmark $(0.950)$ reach this threshold only in specific directions, whereas Germany $(0.952)$, Italy $(0.952)$, and Portugal $(0.950)$ do so when accounting for ESCS and in certain directions.

In the radial case, some differences emerge when ESCS is incorporated as an input compared to when it is excluded. Notable improvements in the efficiency score $\rho^*$ are observed for several countries, particularly Colombia $(+0.092)$, Mexico $(+0.087)$, and Chile $(+0.045)$. 
More modest changes are identified for Portugal $(+0.020)$, the Slovak Republic $(+0.016)$, Germany $(+0.015)$, Greece $(+0.013)$, the Czech Republic $(+0.011)$, Italy $(+0.009)$, France $(+0.004)$, Lithuania $(+0.004)$, Spain $(+0.004)$, and the United States $(+0.001)$. The remaining countries exhibit no change in their efficiency scores.

Although the efficiency scores $\rho^*$ obtained from radial and directional models are very similar, directional models determine targets in accordance with the improvement strategy specified by the orientation coefficients. Consequently, the targets derived from directional models  may differ substantially from those obtained using radial models (see Tables \ref{tab:targetsvrs13}, \ref{tab:targetsvrs134}). This distinction may be critical in establishing the performance levels that each country should attain in order to achieve efficiency.

According to Section~\ref{sec:models}, in a setting where outputs are mean performance scores, two alternative pathways enable a country to attain efficiency when its target is inefficient. To illustrate, consider the radial model under VRS, with instruction time, economic effort, and ESCS as inputs. For Spain, the target activity has the original inputs $(79.23, 229.34, 252.01)$ (see Table \ref{tab:data}) and improved outputs $(512.35, 513.61, 524.69)$ (see Table \ref{tab:res}). 
By contrast, the activity of the efficient projection corresponds to the convex combination $0.53\cdot \text{Ireland} + 0.91\cdot \text{Japan} + 0.038\cdot \text{T\"urkiye}$ (see Table \ref{tab:lambdas}), yielding input levels $(73.41, 229.34, 252.01)$ and output levels $(530.15, 513.61, 541.71)$. Hence, Spain may achieve efficiency through the following alternatives:
\begin{itemize}
\item[(1)] Improve its performance scores to $(530.15, 513.61, 541.71)$ while reducing instruction time to $73.41$. In this case, output improvements are non-radial, and efficiency requires an additional contraction in instruction time.

\item[(2)] Improve its performance scores to $(512.35, 513.61, 524.69)$ and subsequently introduce a mar\-gin\-al improvement in reading performance, which is the only output without inefficiency slack. To preserve the radial improvement structure (where all outputs increase proportionally) this additional improvement may be extended to mathematics and science performances according to the same proportions. For instance, an increase of $1$ point in reading performance may be accompanied by increases of $0.9975$ and $1.02$ points in mathematics and science performances, respectively. The resulting efficient activity then attains output levels $(513.35, 514.61, 525.71)$. Under this approach, the PPS expands in a manner consistent with the present context (i.e., marginal improvements in mean performance scores), while maintaining radial output adjustments and avoiding any reduction in inputs.
\end{itemize}

A more detailed examination of the second alternative can be conducted in the non-radial framework, where the output orientation coefficients are defined by \eqref{eq:diremp}. In this case, Spain’s mean performance scores can be adjusted to the target $(507.21, 513.61, 517.82)$ (see Table \ref{tab:targetsvrs134}), followed by marginal improvements proportional to the orientation coefficients $(2.208, 2.541, 2.106)$ (see Table \ref{tab:dir}). For example, additional increases of $0.86$, $1$, and $0.84$ points in mathematics, reading, and science performances, respectively, yield an efficient activity with outputs $(508.07, 514.61, 518.66)$. This activity adheres to the prescribed improvement strategy defined by the orientation coefficients, without requiring any reduction in input levels.

\subsection{Adaptive constrained enveloping splines}
\label{sec:res_aces}

To complement the classical radial and directional analysis presented above, we re-evaluate the same dataset using an ACES-based technology. The aim is not to replace the deterministic DEA results, but to assess how efficiency scores change when the frontier is estimated through a regularized ML procedure designed to reduce the overfitting problems of conventional DEA. 

The final ACES specification depends on several hyperparameters controlling the complexity of the basis expansion, the admissible interaction structure, the pruning intensity, the shape restrictions, the LOF criterion used in the forward stage, and the refinement step. In this application, not all hyperparameters were tuned jointly. Some were fixed in advance, while the remaining ones were selected through a grid-search procedure adapted to frontier estimation.

The grid search was conducted over the interaction-admission threshold $\xi^{(q)} \in \{0, 0.01, 0.025, 0.05\}$, the decision of whether to impose monotonicity and/or concavity in model~\eqref{eq:aces_multioutput_lp}, the pruning penalty $\gamma \in \{1, 2\}$ in \eqref{eq:gcv}, and the refinement threshold $\psi \in \{0, 0.01, 0.025, 0.05, 0.10\}$ in \eqref{eq:refinement}. The remaining arguments were fixed throughout the exercise. Specifically, the maximum interaction degree $q_{\max}$ was set to $3$ in the three-input specification and to $2$ in the two-input specification; the LOF criterion was fixed to the MAE; and the minimum relative improvement threshold $\xi$ was fixed at $0.005$. Candidate knots were defined on the observed data points, requiring a minimum spacing of one observation between consecutive knots and no spacing between boundary knots and extreme observed values. Finally, the upper bound on the number of BF, $T_{\max}$, was set equal to the size of the training sample.

Hyperparameter selection was carried out through a leave-one-out cross-validation (LOOCV) procedure specifically tailored to frontier estimation. Since standard CV may favour models that fit inefficient interior observations too closely, the validation criterion was restricted to countries located near the empirical frontier. To identify these units, a preliminary output-oriented radial efficiency score $\eta_j^*$ was computed for each country using the standard VRS DEA model in \eqref{eq:rad}. During the LOOCV procedure, the prediction error of the excluded country was included in the validation criterion only if $\eta_j^* < 1.025$, meaning that only countries lying at most $2.5\%$ away from the deterministic frontier contributed to the tuning stage. In addition, inverse-efficiency weights $\omega_j=1/\eta_j^*$ were used, giving relatively greater importance to countries closer to the empirical frontier.

For a configuration $h$, let $\mathcal{K}(h)$ denote the retained folds ($\eta_j^* < 1.025$). The fold-specific error across the $s=3$ outputs is evaluated as $\mathrm{RMSE}_k(h) = \sqrt{\frac{1}{s} \sum_{r=1}^{s} ( \hat{y}^{h}_{j_k r} - y_{j_k r} )^2}$. Candidate specifications are ranked by their median error, $\mathrm{CV}(h) = \operatorname{median}_{k \in \mathcal{K}(h)} \{ \mathrm{RMSE}_k(h) \}$, selecting $h^* = \arg \min_{h \in \mathcal{H}} \mathrm{CV}(h)$. The median is preferred over the mean because the trimmed validation sample is small, making averages highly sensitive to outliers. Since the goal is to recover the technological boundary rather than the conditional mean, this robustness to extreme errors provides a more reliable criterion for out-of-sample performance near the frontier.

\begin{table}[htbp]
\centering
\caption{Final ACES hyperparameter configuration selected by median-based LOOCV.}
\label{tab:aces_hyperparameters}
\begin{tabular}{lcc}
\hline
Hyperparameter & Three-input specification & Two-input specification \\
\hline
$q_{\text{max}}$ & 3 & 2 \\
$\xi^{(q)}$ & 0.050 & 0.000 \\
Monotonicity & No & No \\
Concavity & No & No \\
$\xi$ & 0.005 & 0.005 \\
minspan & 1 & 1 \\
endspan & 0 & 0 \\
$\gamma$ & 1 & 1 \\
$\psi$ & 0.010 & 0.025 \\
LOF criterion & MAE & MAE \\
\hline
\end{tabular}
\end{table}

Using the hyperparameter configurations reported in Table~\ref{tab:aces_hyperparameters}, Tables \ref{tab:res_aces}, \ref{tab:targetsvrs13_ACES} and \ref{tab:targetsvrs134_ACES} report the efficiency scores and output targets obtained under ACES. The comparison with the classical results shows that the main changes are concentrated among countries located near the frontier, whereas the scores of clearly inefficient countries remain essentially unchanged.

This pattern is clear in the two-input specification. Under the classical technology, Ireland, Japan, Korea, Poland, and T\"urkiye lie on the frontier. Under ACES, however, only Ireland, Japan, and Korea remain on the frontier, while Poland and T\"urkiye move slightly away from it. Accordingly, their reported scores become marginally lower than one. In the radial model, for example, Poland decreases from $1.000$ to $0.987$, and T\"urkiye from $1.000$ to $0.993$. Similar downward adjustments are observed for other near-frontier countries. Latvia's radial score falls from $0.990$ to $0.971$, Finland's from $0.978$ to $0.965$, and Austria's from $0.942$ to $0.931$. The same qualitative pattern is observed in the directional specifications: countries that were already close to the frontier tend to receive slightly lower scores, whereas countries located well inside the production possibility set, such as Chile, Colombia, and Mexico, keep exactly the same scores.

When ESCS is included as an additional input, the general conclusions remain the same, but the changes become milder. In particular, accounting for ESCS continues to improve the relative position of countries with less favourable socio-economic conditions, especially Chile, Colombia, and Mexico. At the same time, the ACES results still yield a slightly more selective frontier than the classical ones. In this three-input specification, T\"urkiye returns to the frontier, whereas Poland remains marginally below it, with reported scores between $0.995$ and $0.996$. For the remaining countries, the differences with respect to the classical models are generally smaller than in the two-input case, which suggests that part of the adjustment previously captured by ACES is already absorbed once socio-economic background is explicitly incorporated into the technology.

The same comparison can be made in terms of targets. Tables \ref{tab:targetsvrs13_ACES} and \ref{tab:targetsvrs134_ACES} show that the ACES-based targets are typically more demanding for countries that were already close to the frontier, both in the radial model and in the three directional specifications. In the two-input radial case, for instance, Austria's target increases from $(517.49, 510.20, 521.74)$ to $(523.10, 515.74, 527.40)$, Finland's from $(495.08, 501.30, 522.51)$ to $(501.74, 508.04, 529.54)$, and Latvia's from $(488.27, 479.59, 499.06)$ to $(497.35, 488.50, 508.34)$. The same effect appears under customized directions. For Austria, using the direction in \eqref{eq:diremp}, the target changes from $(514.25, 510.31, 522.44)$ to $(519.15, 515.74, 528.09)$. For Finland, using the percentile-based direction in \eqref{eq:p9010}, the target increases from $(495.76, 497.47, 522.51)$ to $(503.05, 502.01, 529.75)$.

When ESCS is considered, the differences in targets remain visible, although they are generally less pronounced. For example, in the radial case, Austria's target rises from $(516.59, 509.32, 520.84)$ to $(517.75, 510.46, 522.01)$, Finland's from $(495.08, 501.30, 522.51)$ to $(499.21, 505.48, 526.86)$, and Latvia's from $(488.27, 479.59, 499.06)$ to $(490.06, 481.34, 500.89)$. Therefore, the inclusion of ESCS still improves comparability across countries, while ACES continues to produce somewhat more demanding improvement targets for those countries located close to the frontier.

Overall, the ACES-based results do not alter the substantive message of the classical analysis. Ireland, Japan, and Korea remain the clearest benchmark countries, and the relevance of ESCS for fairer comparisons is preserved. The main differences concern countries that appeared very close to the frontier under the classical models, for which ACES yields slightly lower scores and more demanding targets. In this sense, the ACES analysis should be interpreted as a robustness check that refines, rather than replaces, the conclusions obtained from the standard radial and directional DEA models.

\subsection{Models with imprecise data}
\label{sec:impreciseres}

As in the previous analyses, we consider two scenarios. In the first scenario, we take instruction time and economic effort as inputs. In the second scenario, we also consider ESCS as an input.

The implementation of both radial \eqref{eq:radcc} and directional \eqref{eq:dirdcc} chance-constrained models across the two scenarios results in all countries being classified as stochastically efficient. This outcome persists even under the most restrictive specification, $\alpha = 0.5$, which can be attributed to the high variance of the stochastic variables (see Table \ref{tab:data}). Therefore, the chance-constrained models do not provide additional insights for the present analysis.

We also compute fuzzy efficiency scores for the OECD countries under analysis by applying the Kao--Liu fuzzy meta-model with radial model, given by \eqref{eq:radworst} and \eqref{eq:radbest}. For this purpose, a fuzzy number is constructed for each performance using percentiles, as illustrated in Figure~\ref{fig:fuzzy}. The adopted criterion is that each $\alpha$-cut contains $100(1-\alpha)\%$ of the sample, with bounds defined by symmetric percentiles of the form $\textbf{P}_x$ and $\textbf{P}_{100-x}$, where $x = 50\alpha$.
Percentiles are provided by the PISA 2022 report \citep{pisa2022} for mathematics (Table I.B1.2.1.), reading (Table I.B1.2.2.) and science (Table I.B1.2.3.) performances, under the name ``Percentiles''.
With the published data (see Table \ref{tab:fuzzyper}), we can determine $\alpha$-cuts with $\alpha =0.2$ (whose extremes are $\textbf{P}_{10}$ and $\textbf{P}_{90}$, containing $80\%$ of the sample), $\alpha =0.5$ (whose extremes are $\textbf{P}_{25}$ and $\textbf{P}_{75}$, containing $50\%$ of the sample), and $\alpha =1$ (which is crisp, given by the median $\textbf{P}_{50}$). In order to determine the $0$-cuts, we would need the lowest and highest individual scores for each performance, but this information is not published. Moreover, new $\alpha$-cuts for $\alpha$ between $0.2$ and $1$ can be created by interpolating. However, it should be noted that these new $\alpha$-cuts do not have the probabilistic interpretation that the originals do, which are constructed using percentiles. Finally, a Kao--Liu meta-model is applied to these $\alpha$-cuts.

\begin{figure}[htbp]
\centering
  \includegraphics[width=0.3\textwidth]{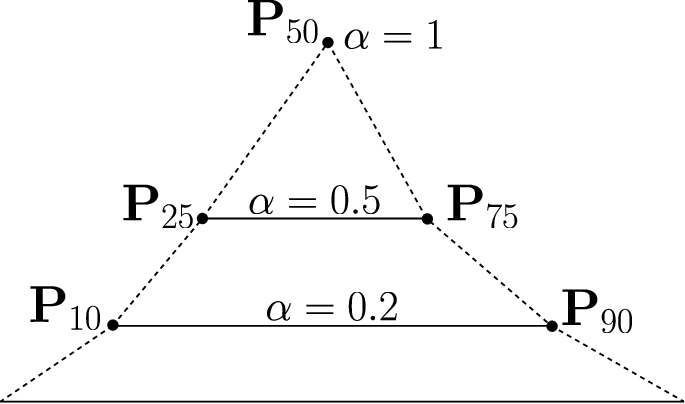}
\caption{Construction of a fuzzy number from percentiles.}
\label{fig:fuzzy}
\end{figure}

For the first scenario, the resulting fuzzy efficiency scores $\rho^*$ are presented in Table \ref{tab:fuzzy_12} and Figure \ref{fig:fuzzy12}. The membership functions of the fuzzy efficiencies for Ireland, Poland, and T\"urkiye degenerate to a crisp value equal to $1$. Furthermore, the associated inefficiency slacks are zero, confirming that they are therefore efficient countries. 
In other words, these three countries consistently lie on the best-practice frontier, even when the production frontier shifts due to variations in the inputs and outputs considered.

The $\alpha $-cuts provide the intervals of the fuzzy efficiency score at different possibility levels. Thus, for the lowest $\alpha $-cut available ($\alpha =0.2$), the range of the efficiency score is obtained, discarding the top $10\%$ and bottom $10\%$ of the performance scores. For example, in the first scenario and for the case of Spain, the interval of the efficiency score is $\left[ 0.554, 1\right] $, referring to Table \ref{tab:fuzzy_12}. This result indicates that the efficiency score for Spain (considering the central $80\%$ of the performance scores), although imprecise, will fall between $0.554$ and $1$. It is noteworthy that Colombia is the country exhibiting the greatest uncertainty in efficiency scores, as reflected by the widest fuzzy efficiency interval: $\left[ 0.462 , 1\right] $. At the other extreme, the $\alpha $-cut for $\alpha =1$ indicates the most likely efficiency score. At $\alpha =1$ the best-practice frontier is formed by Ireland, Japan, Korea, Poland, and T\"urkiye; the remaining countries obtain efficiency scores below $1$ and are therefore classified as inefficient. Note that for this $\alpha $-cut, the efficiency scores are crisp (the lower and upper scores are the same); this is because the top $\alpha $-cut (with $\alpha =1$) is the value $\textbf{P}_{50}$ and not an interval.

Similar observations apply to the second scenario. However, in this case, the crisp efficiency scores of several countries are slightly higher (see Table \ref{tab:fuzzy_123} and Figure \ref{fig:fuzzy123}), and Israel exhibits the greatest uncertainty in its efficiency score, with an interval of $\left[0.492, 1\right]$. 
Consistent with the results obtained from the classical models, substantial improvements are observed for Chile, Colombia, and Mexico. The best-practice frontier continues to be defined by Ireland, Japan, Korea, Poland, and T\"urkiye.

\section{Conclusions}
\label{sec:conclusions}

This paper proposes a structured framework within DEA for evaluating the efficiency of countries based on PISA outcomes, taking mean score performances as outputs. However, the main contribution is the Theorem of equivalence for ratio variables, which ensures the suitability of ratio variables (such as mean score performances) in standard VRS models. Moreover, this result allows a CRS problem with a non-controllable variable to be reformulated as a VRS problem by normalizing all inputs and outputs with respect to the non-controllable variable. This transformation is particularly useful, for instance, in conducting archetypal analysis of DMUs, where the VRS assumption is required  \citep{alcacer2025survey}.
On the other hand, the most important methodological contributions are the development of an approach that incorporates the ESCS as an input variable, together with several systematic methods for estimating improvement direction vectors in directional models. This combination allows for the computation of customized and realistic targets, reflecting the heterogeneous improvement possibilities across countries and providing a more equitable basis for cross-country comparisons.

Classical radial and directional DEA models are considered the most robust and interpretable tools for efficiency assessment in this context. Their flexibility and transparency make them especially suitable for applications involving educational performance indicators such as PISA scores. The proposed enhancements build directly upon these models without compromising their interpretability, reinforcing their central role in empirical applications.

In addition, several complementary methodologies are explored to enrich the analysis. These approaches should be understood as extensions rather than substitutes for classical DEA models, as each introduces specific advantages and limitations. In particular, the ACES methodology proves useful in mitigating overestimation of efficiency, especially under variable returns to scale. Given that the ACES methodology has only recently been developed, its application remains relatively limited in the literature. For this reason, the present study introduces several refinements and novel contributions, including the computation of targets in the estimated frontier based on the direction of improvement. Moreover, its implementation requires the selection of user-defined parameters, which may affect the results depending on the degree of strictness imposed. In this regard, we propose a refined and appropriate parameter selection procedure within the framework of PISA educational efficiency. Finally, to facilitate the application of the ACES methodology, we provide scripts that enable non-expert users to replicate the results.

Stochastic chance-constrained and fuzzy DEA models are also considered. In the present application, chance-constrained approaches provide limited additional insights, as they are better suited to contexts with lower variability or well-characterized noise structures. Fuzzy models, on the other hand, offer a way to incorporate uncertainty into the data and yield more flexible efficiency estimates. Nevertheless, they do not fully capture the underlying stochastic nature of the variables, and their interpretation may be less straightforward.

Overall, the framework presented in this study can be readily extended to other datasets, including past and future PISA reports or similar large-scale educational assessments, even in other areas besides education. Therefore, this methodology represents a useful tool for researchers and practitioners seeking to conduct fair and informative efficiency analyses in educational or other contexts.

\clearpage

\appendix

\setcounter{table}{0}
\counterwithin{table}{section}

\counterwithin{figure}{section}
\renewcommand{\thefigure}{\thesection.\arabic{figure}}

\section{Data and results tables}\label{appendix}

\begin{sidewaystable}[ht]
\centering
\footnotesize
\begin{tabular}{l|rrrrr|rrrrrr}
                & Instruction & Expenditure & Ec. effort & ESCS & ESCS SD & Mathematics & Math. SD & Reading & Read. SD & Science & Sci. SD \\
\hline
Australia       & 110.00   & 126.02  & 225.24   & 293.27                   & 84.46                       & 487.08                          & 99.26                              & 498.05                      & 110.94                         & 507.00                      & 108.93                         \\
Austria         & 65.40    & 154.07  & 258.77    & 262.24                   & 94.30                       & 487.27                          & 93.58                              & 480.41                      & 104.13                         & 491.27                      & 101.11                         \\
Belgium         & 67.85        & 130.34  & 219.54     & 263.62                   & 92.53                       & 489.49                          & 96.33                              & 478.85                      & 104.80                         & 490.58                      & 100.98                         \\
Canada          & 83.07  & 121.68  & 230.50   & 293.44                   & 75.55                       & 496.95                          & 94.01                              & 507.13                      & 108.72                         & 515.02                      & 100.57                         \\
Chile           & 83.00     & 60.63  & 213.69     & 204.47                   & 94.13                       & 411.70                          & 76.63                              & 447.98                      & 93.27                          & 443.54                      & 91.82                          \\
Colombia        & 98.00   & 37.31   & 220.75    & 148.82                   & 120.20                      & 382.70                          & 72.82                              & 408.67                      & 93.25                          & 411.12                      & 86.83                          \\
Czech Rep.  & 69.87    & 100.84  & 220.61    & 244.93                   & 86.75                       & 487.00                          & 93.35                              & 488.60                      & 97.94                          & 497.74                      & 99.49                          \\
Denmark         & 106.00     & 130.69  & 202.07    & 302.94                   & 74.84                       & 489.27                          & 81.56                              & 488.80                      & 91.90                          & 493.82                      & 94.70                          \\
Finland         & 63.84   & 126.78  & 229.65    & 281.18                   & 81.29                       & 484.14                          & 89.28                              & 490.22                      & 104.03                         & 510.96                      & 106.38                         \\
France          & 81.92   & 109.58 & 214.89    & 255.35                   & 92.28                       & 473.94                          & 91.06                              & 473.85                      & 105.84                         & 487.23                      & 102.99                         \\
Germany         & 73.76     & 121.06  & 207.74    & 241.64                   & 104.25                      & 474.83                          & 94.71                              & 479.79                      & 105.91                         & 492.43                      & 106.38                         \\
Greece          & 69.15   & 71.51  & 227.12     & 239.85                   & 92.36                       & 430.15                          & 83.37                              & 438.44                      & 94.32                          & 440.79                      & 90.84                          \\
Hungary         & 59.28    & 78.97  & 214.79     & 255.63                   & 95.63                       & 472.78                          & 93.90                              & 472.97                      & 101.46                         & 485.89                      & 96.45                          \\
Iceland         & 76.20   & 149.64  & 258.14    & 293.13                   & 78.25                       & 458.90                          & 87.78                              & 435.90                      & 103.03                         & 446.93                      & 94.84                          \\
Ireland         & 81.72      & 94.17  & 89.38     & 288.63                   & 80.39                       & 491.65                          & 79.62                              & 516.01                      & 88.18                          & 503.85                      & 91.26                          \\
Israel          & 84.36     & 94.35  & 217.15     & 283.33                   & 92.39                       & 457.90                          & 107.07                             & 473.83                      & 122.12                         & 464.75                      & 109.13                         \\
Italy           & 74.90   & 105.75  & 228.05     & 245.31                   & 93.06                       & 471.26                          & 88.90                              & 481.60                      & 92.27                          & 477.46                      & 92.75                          \\
Japan           & 73.38   & 101.40  & 240.62    & 254.65                   & 71.33                       & 535.58                          & 92.75                              & 515.85                      & 96.26                          & 546.63                      & 92.97                          \\
Korea           & 64.56      & 144.48  & 308.14   & 277.77                   & 82.35                       & 527.30                          & 105.20                             & 515.42                      & 103.28                         & 527.82                      & 105.49                         \\
Latvia          & 58.38     & 68.78  & 200.78     & 254.39                   & 83.03                       & 483.16                          & 80.15                              & 474.57                      & 89.65                          & 493.84                      & 84.89                          \\
Lithuania       & 76.28     & 72.11  & 166.99     & 260.56                   & 88.84                       & 475.15                          & 87.21                              & 471.83                      & 94.24                          & 484.46                      & 92.39                          \\
Mexico          & 78.85     & 28.90  & 147.58     & 160.09                   & 116.35                      & 395.03                          & 69.39                              & 415.36                      & 84.33                          & 409.89                      & 74.90                          \\
Netherlands     & 86.40      & 119.58  & 187.61    & 280.68                   & 87.32                       & 492.68                          & 106.09                             & 459.24                      & 114.88                         & 488.32                      & 112.26                         \\
Norway          & 78.93   & 153.34  & 190.40     & 307.48                   & 83.45                       & 468.45                          & 93.47                              & 476.52                      & 112.39                         & 478.23                      & 105.99                         \\
Poland          & 52.48     & 87.74  & 230.14     & 244.82                   & 88.61                       & 488.96                          & 89.46                              & 488.71                      & 104.00                         & 499.16                      & 96.19                          \\
Portugal        & 76.98    & 98.98   & 274.66     & 232.70                   & 113.77                      & 471.91                          & 89.63                              & 476.59                      & 93.54                          & 484.37                      & 92.04                          \\
Slovak Rep. & 68.23    & 75.30   & 225.32    & 225.04                   & 96.29                       & 463.99                          & 101.09                             & 446.86                      & 104.72                         & 462.27                      & 103.31                         \\
Slovenia        & 63.90       & 102.34  & 233.62    & 277.92                   & 84.38                       & 484.53                          & 89.23                              & 468.54                      & 96.54                          & 499.96                      & 94.20                          \\
Spain           & 79.23     & 93.09  & 229.34     & 252.01                   & 100.92                      & 473.14                          & 86.37                              & 474.31                      & 96.73                          & 484.53                      & 91.66                          \\
Sweden          & 68.91    & 133.02  & 224.62    & 287.99                   & 84.83                       & 481.77                          & 95.60                              & 486.98                      & 110.77                         & 493.55                      & 108.28                         \\
Türkiye         & 62.52      & 46.71   & 153.42    & 136.79                   & 117.19                      & 453.15                          & 89.81                              & 456.08                      & 86.53                          & 475.94                      & 89.37                          \\
United States   & 89.13    & 143.38  & 206.94    & 260.95                   & 97.99                       & 464.89                          & 94.51                              & 503.94                      & 111.44                         & 499.41                      & 108.13                        
\end{tabular}
\caption{Data for variables of the OECD countries considered as DMUs. This includes instruction time in general education ($100$ hours unit), expenditure on educational institutions per student ($1\,000$ dollars unit), expenditure normalized by GDP per capita (multiplied by $100$) representing the economic effort made by the country, and the index of economic social and cultural status (ESCS) (translated $2.55$ units and multiplied by $100$). The outputs are the scores of mathematics, reading and science performances. Standard deviations for the ESCS and scores are also provided for the stochastic analysis.}
\label{tab:data}
\end{sidewaystable}

\setlength{\tabcolsep}{3pt}
\begin{sidewaystable}[ht]
\centering
\footnotesize

\caption{Relative annual rates of change in 2022 of mathematics, reading and science performances, jointly with orientation coefficients computed from different methods.}
\label{tab:dir}
\end{sidewaystable}

\setlength{\tabcolsep}{6pt}
\begin{sidewaystable}[ht]
\centering
\footnotesize
%
\caption{Percentiles for each performance to construct fuzzy numbers, as they were published in \citep[Table I.B1.2.1]{pisa2022}, \citep[Table I.B1.2.2]{pisa2022} and \citep[Table I.B1.2.3]{pisa2022}.}
\label{tab:fuzzyper}
\end{sidewaystable}

\begin{table}[ht]
\centering
\footnotesize
%
\caption{Efficiency scores $\rho ^*$ without considering and considering ESCS as an input for different radial and directional models with deterministic variables under VRS. Threshold $\delta$ is taken $1\%$ for directions (21) 
and $5\%$ for directions (23)
.}
\label{tab:res}
\end{table}   

\begin{table}[ht]
\centering
\footnotesize
%
\caption{Efficiency scores $\rho ^*$ using ACES, without considering and considering ESCS as an input for different radial and directional models with deterministic variables under VRS. Threshold $\delta$ is taken $1\%$ for directions (21) 
and $5\%$ for directions (23)
.}
\label{tab:res_aces}
\end{table}  

\begin{sidewaystable}[ht]
\centering
\footnotesize
%
\caption{Targets in performance scores on the boundary of the PPS ($\partial ^{\text{W}}P$) considering instruction time and economic effort as inputs, computed by radial and directional models with deterministic variables under VRS.}
\label{tab:targetsvrs13}
\end{sidewaystable}

\begin{sidewaystable}[ht]
\centering
\footnotesize
%
\caption{Targets in performance scores on the boundary of the PPS ($\partial ^{\text{W}}P$) considering instruction time and economic effort as inputs, computed by radial and directional models using ACES under VRS.}
\label{tab:targetsvrs13_ACES}
\end{sidewaystable}

\begin{sidewaystable}[ht]
\centering
\footnotesize
%
\caption{Targets in performance scores on the boundary of the PPS ($\partial ^{\text{W}}P$) considering instruction time, economic effort and ESCS as inputs, computed by radial and directional models with deterministic variables under VRS.}
\label{tab:targetsvrs134}
\end{sidewaystable}

\begin{sidewaystable}[ht]
\centering
\footnotesize
%
\caption{Targets in performance scores on the boundary of the PPS ($\partial ^{\text{W}}P$) considering instruction time, economic effort and ESCS as inputs, computed by radial and directional models using ACES under VRS.}
\label{tab:targetsvrs134_ACES}
\end{sidewaystable}

\begin{sidewaystable}[ht]
\centering
\footnotesize
%
\caption{Optimal values of $\lambda _j$ for each DMU, computed by a radial model with deterministic variables under VRS without considering and considering ESCS as an input. In the last row are the means of optimal $\lambda$ values for countries in reference sets of other countries, expressed as a percentage. This percentage quantifies the influence (or the relative importance) of the corresponding efficient DMU as a reference for the inefficient DMUs.}
\label{tab:lambdas}
\end{sidewaystable}

\begin{table}[ht]
\centering
\footnotesize
%
\caption{Lower (L) (worst) and upper (U) (best) efficiency scores $\rho ^*$ for different values of $\alpha$ applying Kao--Liu meta-model with radial model under VRS, considering instruction time and economic effort as inputs.}
\label{tab:fuzzy_12}
\end{table}

\begin{table}[ht]
\centering
\footnotesize
%
\caption{Lower (L) (worst) and upper (U) (best) efficiency scores $\rho ^*$ for different values of $\alpha$ applying Kao--Liu meta-model with radial model under VRS, considering instruction time, economic effort and ESCS as inputs.}
\label{tab:fuzzy_123}
\end{table}

\clearpage

\section*{Kao-Liu fuzzy meta-model figures}

\begin{figure}[ht]
\centering
  \includegraphics[width=0.91\textwidth]{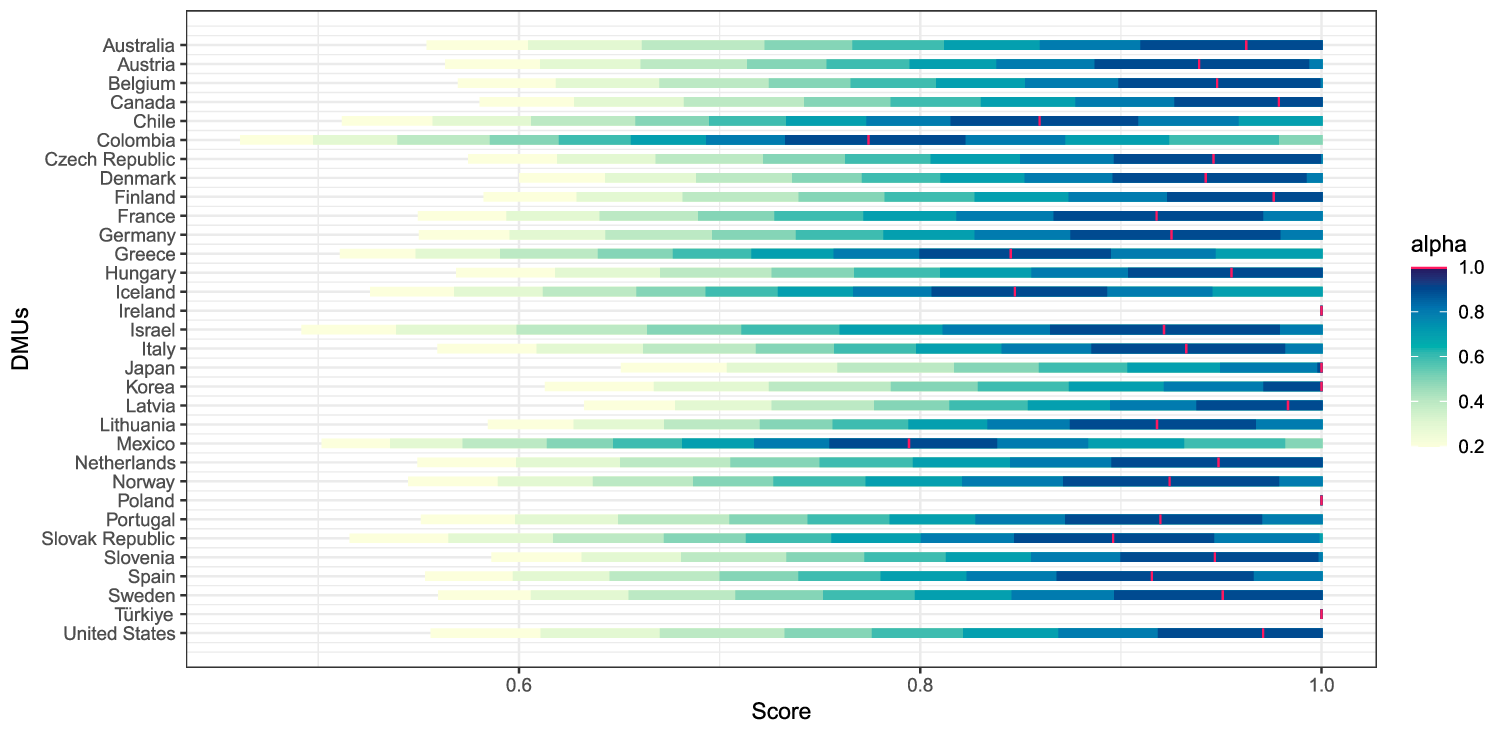}
\caption{Efficiency scores $\rho ^*$ applying Kao--Liu fuzzy meta-model with radial model under VRS, considering instruction time and economic effort as inputs. It shows the same information as Table \ref{tab:fuzzy_12}.}
\label{fig:fuzzy12}
\end{figure}

\begin{figure}[hb]
\centering
  \includegraphics[width=0.91\textwidth]{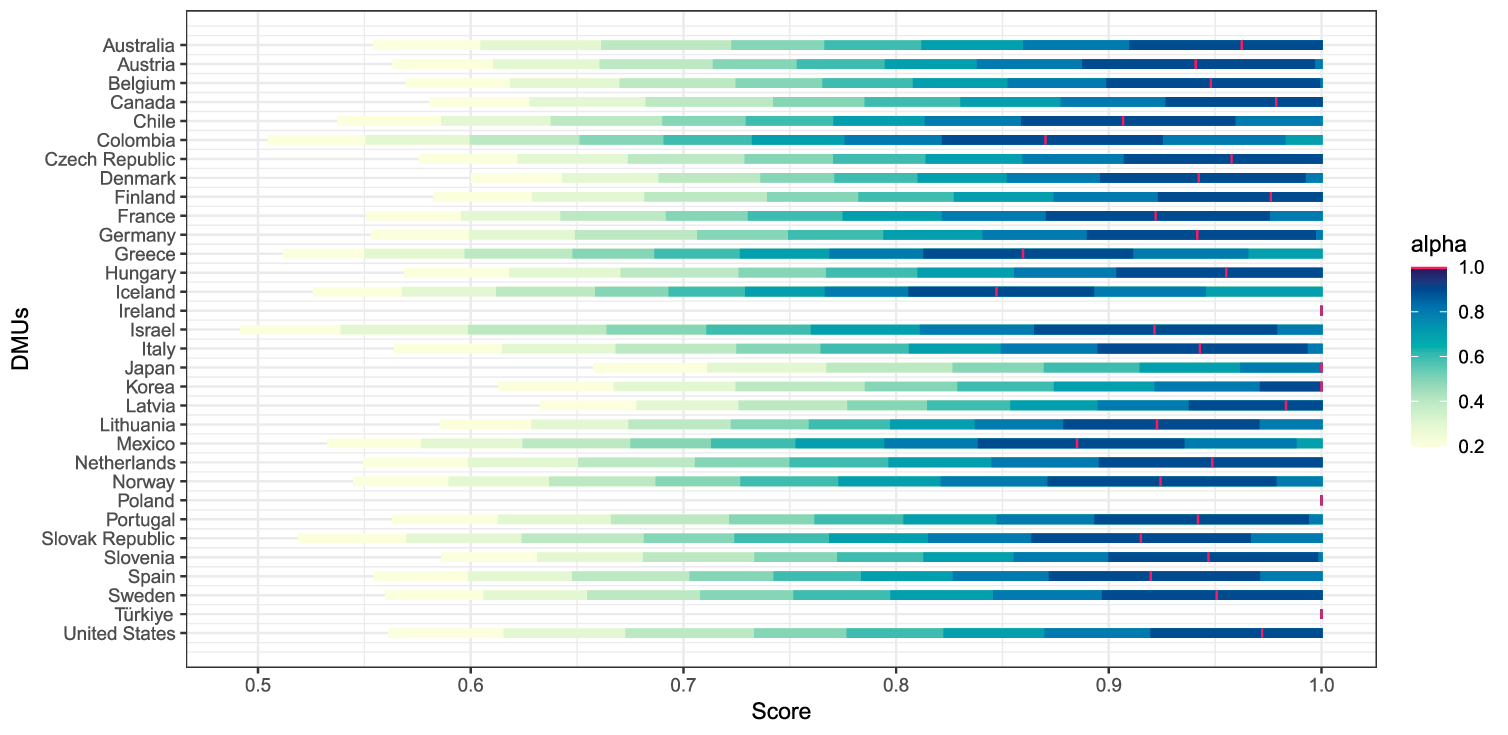}
\caption{Efficiency scores $\rho ^*$ applying Kao--Liu fuzzy meta-model with radial model under VRS, considering instruction time, economic effort, and ESCS as inputs. It shows the same information as Table \ref{tab:fuzzy_123}.}
\label{fig:fuzzy123}
\end{figure}

\clearpage

\bibliography{refs}

@article{alcacer2025survey,
  title={A survey on archetypal analysis},
  author={Alcacer, Aleix and Epifanio, Irene and Mair, Sebastian and M{\o}rup, Morten},
  journal={arXiv preprint arXiv:2504.12392},
  year={2025},
  doi = {10.48550/arXiv.2504.12392}
}

@article{Andersen1993,
author = {Andersen, Per and Petersen, Niels Christian},
doi = {10.1287/mnsc.39.10.1261},
journal = {Management Science},
number = {10},
pages = {1261--1264},
title = {A Procedure for Ranking Efficient Units in Data Envelopment Analysis},
volume = {39},
year = {1993}
}

@article{Aparicio2007,
  title={Closest targets and minimum distance to the Pareto-efficient frontier in {DEA}},
  author={Aparicio, Juan and Ruiz, Jos\'e L. and Sirvent, Inmaculada},
  journal={Journal of Productivity Analysis},
  volume={28},
  number={3},
  pages={209-218},
  year={2007},
  doi = {10.1007/s11123-007-0039-5}
}

@article{Banker1984,
author = {Banker, R. D. and Charnes, A. and Cooper, W. W.},
title = {Some Models for Estimating Technical and Scale Inefficiencies in Data Envelopment Analysis},
journal = {Management Science},
volume = {30},
number = {9},
pages = {1078-1092},
year = {1984},
doi = {10.1287/mnsc.30.9.1078}
}

@article{Bolos2024,
author = {Bol\'os, V. J. and Ben\'{\i}tez, R. and Coll-Serrano, V.},
title = {Chance constrained directional models in stochastic data envelopment analysis},
journal = {Operations Research Perspectives},
volume = {12},
year = {2024},
pages = {100307},
doi = {10.1016/j.orp.2024.100307}
}

@article{Bolos2025,
author = {Bol\'os, V. J. and Ben\'{\i}tez, R. and Coll-Serrano, V.},
title = {A new family of models with generalized orientation in data envelopment analysis},
journal = {International Transactions in Operational Research},
volume = {early access},
year = {2026},
doi = {10.1111/itor.70063}
}

@article{Briec1997,
author = {Briec, W.},
title = {A Graph-Type Extension of {Farrell} Technical Efficiency Measure},
journal = {Journal of Productivity Analysis},
volume = {8},
number = {1},
pages = {95-110},
year  = {1997},
doi={10.1023/A:1007728515733}
}

@article{Buy2022,
title = {Analyzing {TALIS} Indicators and {PISA} Results with Data Envelopment: Comparison of {EMS}, {DEAP} and {R} Software},
volume = {9},
doi = {10.33200/ijcer.1038281},
number = {3},
journal={International Journal of Contemporary Educational Research},
author={B\"uy\"ukkidik, Serap},
year={2022},
pages={492–508}
}

@article{Chambers1996,
title = {Benefit and Distance Functions},
journal = {Journal of Economic Theory},
volume = {70},
number = {2},
pages = {407-419},
year = {1996},
doi={10.1006/jeth.1996.0096},
author = {Chambers, R. G. and Chung, Y. and F\"are, R.}
}

@article{Chambers1998,
author = {Chambers, R. G. and Chung, Y. and F\"are, R.},
doi={10.1023/A:1022637501082},
journal = {Journal of Optimization Theory and Applications},
number = {2},
pages = {351-364},
title = {Profit, Directional Distance Functions, and {Nerlovian} Efficiency},
volume = {98},
year = {1998}
}

@article{Charnes1978,
author = {Charnes, A. and Cooper, W. W. and Rhodes, E.},
doi={10.1016/0377-2217(78)90138-8},
journal = {European Journal of Operational Research},
number = {6},
pages = {429-444},
title = {Measuring the Efficiency of Decision Making Units},
volume = {2},
year = {1978}
}

@article{Charnes1979,
title = {Short Communication: Measuring the Efficiency of Decision Making Units},
journal = {European Journal of Operational Research},
volume = {3},
number = {4},
pages = {339},
year = {1979},
doi = {10.1016/0377-2217(79)90229-7},
author = {A. Charnes and W.W. Cooper and E. Rhodes}
}

@article{Charnes1981,
author = {Charnes, A. and Cooper, W. W. and Rhodes, E.},
doi = {10.1287/mnsc.27.6.668},
journal = {Management Science},
number = {6},
pages = {668--697},
title = {Evaluating Program and Managerial Efficiency: an Application of Data Envelopment Analysis to {Program Follow Through}},
volume = {27},
year = {1981}
}

@article{Charnes1985,
author = {Charnes, A and Cooper, W.W and Golany, B and Seiford, L and Stutz, J},
doi = {10.1016/0304-4076(85)90133-2},
journal = {Journal of Econometrics},
number = {1-2},
pages = {91--107},
title = {Foundations of Data Envelopment Analysis for {Pareto-Koopmans} Efficient Empirical Production Functions},
volume = {30},
year = {1985}
}

@article{Cooper1996,
author = {W W Cooper and Z M Huang and S X Li},
title = {Satisficing {DEA} Models under Chance Constraints},
journal = {Annals of Operations Research},
volume = {66},
pages = {279-295},
year  = {1996},
doi = {10.1007/BF02187302}
}

@article{Cooper1998,
author = {W W Cooper and Z Huang and V Lelas and S X Li and O B Olesen},
title = {Chance Constrained Programming Formulations for Stochastic Characterizations of Efficiency and Dominance in {DEA}},
journal = {Journal of Productivity Analysis},
volume = {9},
number = {1},
pages = {53-79},
year  = {1998},
doi = {10.1023/A:1018320430249}
}

@article{Cooper2002,
author = {W W Cooper and H Deng and Z Huang and S X Li},
title = {Chance constrained programming approaches to technical efficiencies and inefficiencies in stochastic data envelopment analysis},
journal = {Journal of the Operational Research Society},
volume = {53},
number = {12},
pages = {1347-1356},
year  = {2002},
doi = {10.1057/palgrave.jors.2601433}
}

@book{Cooper2007,
author = {Cooper, William W and Seiford, Lawrence M and Tone, Kaoru},
doi = {10.1007/978-0-387-45283-8},
isbn = {9780387452814},
issn = {0160-5682},
pages = {483},
publisher = {Springer},
edition = {2nd},
title = {Data Envelopment Analysis. A Comprehensive Text with Models, Applications, References and {DEA}-Solver Software},
year = {2007}
}

@article{Cilin2018,
title = {Testing the Validity and Structure of the Data Envelopment Analysis {PISA} Scores},
journal={International Journal of Advances in Management and Economics}, author={\c{C}ilingirt\"urk, A Mete and Turanli, M\"unevver},
year={2018}
}

@article{deaRSX,
title = {{deaR}: Conventional and fuzzy {DEA} models with {R}},
journal = {SoftwareX},
volume = {31},
pages = {102266},
year = {2025},
doi = {10.1016/j.softx.2025.102266},
author = {Bol\'os, V. J. and Ben\'{\i}tez, R. and Coll-Serrano, V.}
}

@Manual{deaR25,
title = {{deaR}: Conventional and Fuzzy Data Envelopment Analysis},
author = {Coll-Serrano, V. and Bol\'os, V. J. and Ben\'{\i}tez, R.},
year = {2025},
note = {R package version 1.5.2},
doi = {10.32614/CRAN.package.deaR},
url = {https://CRAN.R-project.org/package=deaR}
}

@article{Deb51,
author = {Debreu, G.},
title = {The coefficient of resource utilization},
journal = {Econometrica},
volume = {9},
pages = {273-292},
year = {1951},
doi = {10.2307/1906814}
}

@article{Far57,
author = {Farrell, M. J.},
title = {The measurement of productive efficiency},
journal = {Journal of the Royal Statistical Society, Series A (General)},
volume = {120},
number = {3},
pages = {253-290},
year = {1957},
doi = {10.2307/2343100}
}

@article{Fare1978,
author = {Rolf F{\"a}re and C. A. {Knox Lovell}},
title = {Measuring the Technical Efficiency of Production},
journal = {Journal of Economic Theory},
volume = {19},
number = {1},
pages = {150-162},
year = {1978},
doi = {10.1016/0022-0531(78)90060-1}
}

@book{Fare1985,
author = {F\"are, R. and Grosskopf, S. and Lovell, C. A. K.},
title = {The Measurement of Efficiency of Production},
series = {Studies in Productivity Analysis},
year = {2013},
publisher = {Springer Dordrecht},
doi = {10.1007/978-94-015-7721-2}
}

@article{Golany1997,
journal={Socio-Economic Planning Sciences},
author={Golany, Boaz and Thore, Sten},
title={The Economic and Social Performance of Nations: Efficiency and Returns to Scale},
year={1997},
pages={191-204},
volume={31},
number={3}
}

@article{Guo2001,
title = {Fuzzy {DEA}: a perceptual evaluation method},
journal = {Fuzzy Sets and Systems},
volume = {119},
number = {1},
pages = {149-160},
year = {2001},
doi = {10.1016/S0165-0114(99)00106-2},
author = {Peijun Guo and Hideo Tanaka}
}

@article{Kao2000,
title = {Fuzzy efficiency measures in data envelopment analysis},
journal = {Fuzzy Sets and Systems},
volume = {113},
number = {3},
pages = {427-437},
year = {2000},
doi = {10.1016/S0165-0114(98)00137-7},
author = {Chiang Kao and Shiang-Tai Liu}
}

@inproceedings{Kocak2011,
author = {Kocak, Habip and \c{C}ilingirt\"urk, Ahmet},
year = {2011},
month = {04},
booktitle = {Annual International Conference on Operations Research and Statistics},
pages = {83--88},
title = {Efficiency Analysis of {OECD} Public Education Spending}
}

@incollection{Koop51,
author = {Koopmans, T. C.},
title = {Analysis of production as an efficient combination of activities},
editor = {Koopmans, T. C.},
publisher = {Cowles Commission for Research in Economics},
booktitle = {Activity Analysis of Production and Allocation},
series = {13},
pages = {33-97},
year = {1951}
}

@article{Lan1993,
author = {Land, K. C. and Knox Lovell, C. A. and Thore, S.},
title = {Chance-constrained data envelopment analysis},
journal = {Managerial and Decision Economics},
year = {1993},
volume = {14},
number = {6},
pages = {541--554},
note = {https://www.jstor.org/stable/2487873}
}

@article{Leon2003,
title = {A fuzzy mathematical programming approach to the assessment of efficiency with {DEA} models},
journal = {Fuzzy Sets and Systems},
volume = {139},
number = {2},
pages = {407-419},
year = {2003},
doi = {10.1016/S0165-0114(02)00608-5},
author = {T. León and V. Liern and J.L. Ruiz and I. Sirvent}
}

@article{Olesen1995,
author = {Olesen, O. B. and Petersen, N. C.},
title = {Chance Constrained Efficiency Evaluation},
journal = {Management Science},
volume = {41},
number = {3},
pages = {442-457},
year = {1995},
doi = {10.1287/mnsc.41.3.442}
}

@article{Olesen2015,
title = {Efficiency analysis with ratio measures},
journal = {European Journal of Operational Research},
volume = {245},
number = {2},
pages = {446-462},
year = {2015},
doi = {10.1016/j.ejor.2015.03.013},
author = {Ole Bent Olesen and Niels Christian Petersen and Victor V. Podinovski}
}

@book{pisa2022,
  title     = {{PISA} 2022 Results (Volume I): The State of Learning and Equity in Education},
  author    = {{OECD}},
  year      = {2023},
  publisher = {OECD Publishing},
  address   = {Paris},
  series    = {PISA},
  doi       = {10.1787/53f23881-en},
  url       = {https://doi.org/10.1787/53f23881-en}
}

@book{glance2023,
  title     = {Education at a Glance 2023: {OECD} Indicators},
  author    = {{OECD}},
  year      = {2023},
  publisher = {OECD Publishing},
  address   = {Paris},
  doi       = {10.1787/e13bef63-en},
  url       = {https://doi.org/10.1787/e13bef63-en}
}

@book{glance2025,
  title     = {Education at a Glance 2025: {OECD} Indicators},
  author    = {{OECD}},
  year      = {2025},
  publisher = {OECD Publishing},
  address   = {Paris},
  doi       = {10.1787/1c0d9c79-en},
  url       = {https://doi.org/10.1787/1c0d9c79-en}
}

@Manual{R25,
title = {R: A Language and Environment for Statistical Computing},
author = {{R Core Team}},
organization = {R Foundation for Statistical Computing},
address = {Vienna, Austria},
year = {2025},
url = {https://www.R-project.org/}
}

@Manual{SdeaR25,
title = {{SdeaR}: Stochastic Data Envelopment Analysis},
author = {Bol\'os, V. J. and Coll-Serrano, V. and Ben\'{\i}tez, R.},
year = {2025},
note = {R package version 1.0.1},
doi = {10.32614/CRAN.package.SdeaR},
url = {https://CRAN.R-project.org/package=SdeaR}
}

@article{Sirin2005,
title = {Socioeconomic status and academic achievement: A meta-analytic review of research},
journal = {Review of Educational Research},
volume = {75},
pages = {417-453},
year = {2005},
doi = {10.3102/00346543075003417},
author = {Sirin, S. R.}
}

@article{Tavana2013,
title = {Chance-constrained {DEA} models with random fuzzy inputs and outputs},
journal = {Knowledge-Based Systems},
volume = {52},
pages = {32-52},
year = {2013},
doi = {10.1016/j.knosys.2013.05.014},
author = {Madjid Tavana and Rashed Khanjani Shiraz and Adel Hatami-Marbini and Per J. Agrell and Khalil Paryab}
}

@article{Tone2001,
author = {Tone, Kaoru},
doi={10.1016/S0377-2217(99)00407-5},
journal = {European Journal of Operational Research},
number = {3},
pages = {498--509},
title = {A Slacks-Based Measure of Efficiency in Data Envelopment Analysis},
volume = {130},
year = {2001}
}

@article{Tone2016,
author = {Kaoru Tone},
title = {Data Envelopment Analysis as a Kaizen Tool: SBM Variations Revisited},
year = {2016},
month = {8},
volume = {16},
pages = {49--61},
journal = {Bulletin of Mathematical Sciences and Applications},
doi = {10.18052/www.scipress.com/BMSA.16.49}
}

@article{espana2024,
  title={Estimating production functions through additive models based on regression splines},
  author={Espa{\~n}a, Victor J and Aparicio, Juan and Barber, Xavier and Esteve, Miriam},
  journal={European Journal of Operational Research},
  volume={312},
  number={2},
  pages={684--699},
  year={2024},
  publisher={Elsevier}
}

@article{espana2025,
  title={Estimating production technologies using multi-output adaptive constrained enveloping splines},
  author={Espa{\~n}a, Victor J and Aparicio, Juan and Barber, Xavier},
  journal={Computers \& Operations Research},
  pages={107242},
  year={2025},
  publisher={Elsevier}
}

@article{espana2025rf,
  title={An adaptation of Random Forest to estimate convex non-parametric production technologies: an empirical illustration of efficiency measurement in education},
  author={Espa{\~n}a, Victor J and Aparicio, Juan and Barber, Xavier},
  journal={International Transactions in Operational Research},
  volume={32},
  number={5},
  pages={2523--2546},
  year={2025},
  doi = {10.1111/itor.13561},
  publisher={Wiley Online Library}
}

@article{friedman1991,
  title={Multivariate adaptive regression splines},
  author={Friedman, Jerome H},
  journal={The annals of statistics},
  volume={19},
  number={1},
  pages={1--67},
  year={1991},
  publisher={Institute of Mathematical Statistics}
}

@article{esteve2020,
  title={Efficiency analysis trees: A new methodology for estimating production frontiers through decision trees},
  author={Esteve, Miriam and Aparicio, Juan and Rabasa, Alejandro and Rodriguez-Sala, Jesus J},
  journal={Expert Systems with Applications},
  volume={162},
  pages={113783},
  year={2020},
  publisher={Elsevier}
}

@article{guillen2023,
  title={Gradient tree boosting and the estimation of production frontiers},
  author={Guillen, Maria D and Aparicio, Juan and Esteve, Miriam},
  journal={Expert systems with applications},
  volume={214},
  pages={119134},
  year={2023},
  publisher={Elsevier}
}

@article{valero2021,
  title={Support vector frontiers: A new approach for estimating production functions through support vector machines},
  author={Valero-Carreras, Daniel and Aparicio, Juan and Guerrero, Nadia M},
  journal={Omega},
  volume={104},
  pages={102490},
  year={2021},
  publisher={Elsevier}
}

@article{moragues2023,
  title={Measuring technical efficiency for multi-input multi-output production processes through OneClass Support Vector Machines: a finite-sample study},
  author={Moragues, Raul and Aparicio, Juan and Esteve, Miriam},
  journal={Operational Research},
  volume={23},
  number={3},
  pages={47},
  year={2023},
  publisher={Springer}
}
\bibliographystyle{abbrvnat}

\end{document}